\setlist[itemize]{noitemsep,label=$-$}
\setlist[enumerate]{noitemsep,label=\itshape(\arabic*)}
\newcommand{\appref}[1]{\hyperref[#1]{Appendix~\ref{#1}}}
\declaretheorem[name=Theorem]{theorem}
\declaretheorem[name=Lemma,sibling=theorem]{lemma}
\declaretheorem[name=Claim,sibling=theorem]{claim}
\declaretheorem[name=Definition,style=definition]{definition}
\renewcommand{\H}{\mathbf{H}}
\renewcommand{\Pr}{\mathbf{Pr}}
\newcommand{\Ex}{\mathbf{E}}
\newcommand{\Hmin}{\H_\infty}
\newcommand{\Dmin}{\mathbf{D}_\infty}
\DeclareMathOperator{\poly}{poly}
\DeclareMathOperator{\free}{free}
\DeclareMathOperator{\fix}{fix}
\DeclareMathOperator{\supp}{supp}
\newcommand{\smallFunction}[2]{\newcommand{#1}{{\textsc{#2}}}}
\smallFunction{\Or}{Or}
\smallFunction{\Xor}{Xor}
\smallFunction{\Ind}{Ind}
\smallFunction{\Forr}{Forrelation}
\smallFunction{\Eol}{End-of-Line}
\newcommand{\newclass}[2]{\newcommand{#1}{{\text{\upshape\sffamily #2}}\xspace}}
\renewcommand{\P}{\text{\upshape\sffamily P}\xspace}
\newclass{\BPP}{BPP}
\newclass{\NP}{NP}
\newclass{\RP}{RP}
\newclass{\ZPP}{ZPP}
\newclass{\BQP}{BQP}
\newclass{\PPAD}{PPAD}
\newclass{\WAPP}{WAPP}
\newcommand{\newsftext}[2]{\newcommand{#1}{{\text{\upshape\sffamily #2}}}}
\newsftext{\dt}{dt}
\newsftext{\cc}{cc}
\newcommand{\BPPcc}{\BPP^{\cc}}
\newcommand{\BPPdt}{\BPP^{\dt}}
\newcommand{\BQPcc}{\BQP^{\cc}}
\newcommand{\BQPdt}{\BQP^{\dt}}
\renewcommand{\b}{\bm{b}}
\renewcommand{\c}{\bm{c}}
\renewcommand{\d}{\bm{d}}
\renewcommand{\i}{\bm{i}}
\newcommand{\bell}{\bm{\ell}}
\newcommand{\m}{\bm{m}}
\newcommand{\p}{\bm{p}}
\newcommand{\q}{\bm{q}}
\renewcommand{\r}{\bm{r}}
\newcommand{\s}{\bm{s}}
\renewcommand{\t}{\bm{t}}
\newcommand{\x}{\bm{x}}
\newcommand{\y}{\bm{y}}
\newcommand{\z}{\bm{z}}
\newcommand{\bdelta}{\bm{\delta}}
\newcommand{\bgamma}{\bm{\gamma}}
\newcommand{\I}{\bm{I}}
\newcommand{\X}{\bm{X}}
\newcommand{\Y}{\bm{Y}}
\newcommand{\PI}{\bm{\Pi}}
\newcommand{\calD}{\mathcal{D}}
\newcommand{\calT}{\mathcal{T}}
\newcommand{\calU}{\mathcal{U}}
\newcommand{\calX}{\mathcal{X}}
\newcommand{\calY}{\mathcal{Y}}
\newcommand{\tOmega}{\tilde{\Omega}}
\newcommand{\ol}[1]{\overline{#1}}
\newcommand{\aPi}{\ol{\Pi}}
\let\OLDthebibliography\thebibliography
\renewcommand\thebibliography[1]{
  \OLDthebibliography{#1}
}
\begin{document}

\title{Query-to-Communication Lifting for $\BPP$}

\author{\setlength\tabcolsep{1.2em}
\begin{tabular}{ccc}
Mika G\"o\"os &
Toniann Pitassi &
Thomas Watson\footnote{Supported by NSF grant CCF-1657377.}\\[0mm]
\slshape\small Harvard University and &
\slshape\small University of Toronto &
\slshape\small University of Memphis \\[-1.1mm]
\slshape\small Simons Institute
\end{tabular}}

\date{\large\today}

\maketitle

\begin{abstract}\noindent
For any $n$-bit boolean function $f$, we show that the randomized communication complexity of the composed function $f\circ g^n$, where $g$ is an index gadget, is characterized by the randomized decision tree complexity of $f$. In particular, this means that many query complexity separations involving randomized models (e.g., classical vs.\ quantum) automatically imply analogous separations in communication complexity.
\end{abstract}

\section{Introduction} \label{sec:intro}

A {\bf\itshape query-to-communication lifting theorem} (a.k.a.\ communication-to-query simulation theorem) translates lower bounds on some type of \emph{query complexity} (a.k.a.~decision tree complexity)~\cite{vereshchagin99relativizability,buhrman02complexity,jukna12boolean} of a boolean function $f$ into lower bounds on a corresponding type of \emph{communication complexity}~\cite{kushilevitz97communication,jukna12boolean,rao17communication} of a two-party version of $f$. See \autoref{tab:lifting} for a list of several known results in this vein. In this work, we show a lifting theorem for bounded-error randomized (i.e., $\BPP$-type) query/com\-munication complexity. Such a theorem had been conjectured by~\cite{anshu16separations,bendavid16randomized,chattopadhyay17composition,wu17raz} and (ad nauseam) by the current authors.

\subsection{Our result}
For a function $f\colon\{0,1\}^n\to\{0,1\}$ (called the \emph{outer function}) and a two-party function $g\colon \calX\times\calY\to\{0,1\}$ (called the \emph{gadget}), their composition $f\circ g^n\colon\calX^n\times\calY^n\to\{0,1\}$ is defined by
\[
(f\circ g^n)(x,y)~\coloneqq~f(g(x_1,y_1),\ldots,g(x_n,y_n)).
\]
Here, Alice holds $x\in \calX^n$ and Bob holds $y\in\calY^n$. Our result is proved for the popular \emph{index} gadget $\Ind_m\colon[m]\times\{0,1\}^m \to\{0,1\}$ mapping $(x,y)\mapsto y_x$. We use $\BPPdt$ and $\BPPcc$ to denote the usual bounded-error randomized query and communication complexities. That is, $\BPPdt(f)$ is the minimum cost of a randomized decision tree (distribution over deterministic decision trees) which, on each input $z$, outputs $f(z)$ with probability at least $2/3$, where the cost is the maximum number of queries over all inputs and outcomes of the randomness; $\BPPcc(F)$ is defined similarly but with communication protocols instead of decision trees.

\begin{theorem}[Lifting for $\BPP$] \label{thm:main}
Let $m=m(n)\coloneqq n^{256}$. For every $f\colon\{0,1\}^n\to\{0,1\}$,
\[
\BPPcc(f\circ \Ind_m^n)~=~\BPPdt(f)\cdot\Theta(\log n).
\]
\end{theorem}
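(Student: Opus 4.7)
The theorem has two directions. The upper bound $\BPPcc(f \circ \Ind_m^n) \leq \BPPdt(f) \cdot O(\log n)$ is the easy direction: given a randomized decision tree for $f$, Alice and Bob use public randomness to sample the same tree, and simulate each query to bit $i$ by having Alice send her index $\x_i \in [m]$ using $\log m = O(\log n)$ bits, after which Bob replies with $\y_{i,\x_i} = \Ind_m(\x_i,\y_i) = z_i$. The error of the decision tree carries over directly, so the main technical work is in the other direction.

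For the hard direction $\BPPcc(f \circ \Ind_m^n) \geq \BPPdt(f) \cdot \Omega(\log n)$, I would apply Yao's minimax principle to reduce to a \emph{deterministic} statement: for every ``canonical'' distribution $\mu$ over $\calX^n \times \calY^n$ (obtained from a distribution $\nu$ over $\{0,1\}^n$ by sampling $\x$ uniformly in $[m]^n$ and then $\y$ uniformly in $\{0,1\}^{mn}$ subject to $\y_{i,\x_i} = z_i$ when $z \sim \nu$), and every deterministic protocol $\Pi$ of cost $c$ that computes $f \circ \Ind_m^n$ with small error under $\mu$, one can build a deterministic depth-$O(c/\log n)$ decision tree computing $f$ with small error under $\nu$.

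The simulation itself follows the rectangle/density template of the earlier Raz--McKenzie style deterministic lifting theorems. Reading bits of $\z$, the tree walks down the protocol tree of $\Pi$, maintaining a rectangle $R = X \times Y$ of transcripts consistent with the bits communicated so far and with the queried values $\z_I$ on a set $I \subseteq [n]$. The key invariant is that $R$ is ``$\delta$-dense'' on its free coordinates $F = [n] \setminus I$: for every $S \subseteq F$, $\Hmin(\x_S) \geq (1-\delta)|S|\log m$ and $\Hmin(\y_S) \geq (1-\delta)m|S|$. When the next communicated bit would violate this invariant at some coordinate $i$, the decision tree queries $z_i$ and restricts $R$ accordingly; amortizing entropy loss against communicated bits bounds the total number of such queries by $O(c/\log m) = O(c/\log n)$. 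A thickness lemma for the index gadget (dense rectangles project onto all of $\{0,1\}^F$ under $\Ind_m^F$) then guarantees that at every stage of the simulation, the current rectangle actually contains a pair $(x,y)$ with $\Ind_m^n(x,y) = z$, so the leaf's label is a legitimate value of $\Pi$ on an input consistent with $z$.

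The main obstacle is the randomized setting, specifically that after minimax the resulting deterministic $\Pi$ is only small-error \emph{on average} over the lifted distribution $\mu$. The decision tree must therefore not merely reach \emph{some} consistent leaf of $\Pi$ but must land in each leaf with roughly the probability that $\Pi$ itself does, conditioned on $\z$; otherwise the simulation could systematically bias toward error leaves, and no averaging-based correctness guarantee would survive. Controlling this ``leaf-reweighting'' error while simultaneously maintaining $\delta$-density against the adversarial transcript is the core new ingredient beyond deterministic lifting, and it is what forces the gadget size $m = n^{256}$: a polynomially large alphabet gives enough entropy slack so that the density invariant absorbs both the communication-induced loss and the reweighting-induced distortion within a single amortized budget.
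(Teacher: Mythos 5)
Your easy direction matches the paper. For the hard direction, you correctly identify the central new difficulty---the simulation must emit each transcript with roughly the probability that $\Pi$ generates it conditioned on $z$, not merely land on \emph{some} consistent leaf---but you do not supply a mechanism, and the Raz--McKenzie-style walk you sketch (deterministically descend the protocol tree, query $z_i$ ``when density would break'') cannot produce a distribution over transcripts: on a fixed $z$ a deterministic walk reaches one leaf. What is missing is the paper's three-part engine. (1) The uniform marginals lemma (Lemma~\ref{lem:general}): for any $\rho$-structured rectangle $X\times Y$ with $\Dmin(\Y)\le n^3$, the uniform distribution on $G^{-1}(z)\cap X\times Y$ has both marginals $1/n^2$-close to uniform on $X$ and on $Y$. (2) A \emph{random} walk: at each node, follow branch $X^b$ or $Y^b$ with probability exactly $|X^b|/|X|$ or $|Y^b|/|Y|$, which by (1) tracks the true branching probability under $(\x,\y)\sim G^{-1}(z)$ up to $1/n^2$ per step. (3) A density-restoring \emph{partition} of $X$ after each Alice bit (Lemma~\ref{lem:density-restoring}): one partitions $X=\bigcup_i X^i$ so that each $X^i$ fixes a block set $I_i$ and is $0.9$-dense on the rest, moves to $X^i$ with probability $|X^i|/|X|$, queries $z_{I_i}$, and charges $|I_i|$ against a potential function to get $O(|\Pi|/\log n)$ queries. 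Your claim that the distortion is ``absorbed within a single amortized budget'' conflates two separate accounts: the $1/n^2$ statistical errors from (1)--(2) accumulate additively over the $\le n\log m$ iterations, independently of the entropy-potential bookkeeping in (3).

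Two secondary issues. Your detour through Yao's minimax is valid but not what the paper does and buys nothing: the paper converts a randomized protocol $\PI$ directly into a randomized decision tree (sample $\Pi\sim\PI$, then run the Theorem~\ref{thm:goal} simulation on it), exploiting that a mixture of approximately-correct output distributions is approximately correct; to use minimax you would still need to build the randomized tree first and then derandomize by averaging. Second, your invariant on Bob's side is overclaimed and would be unmaintainable: the paper does \emph{not} keep $\Y$ blockwise-dense, only the one-shot bound $\Dmin(\Y)\le n^3$, which it shows grows by at most $n\log m+2$ per iteration with high probability. This asymmetry between a blockwise-dense $\X$ and a merely low-deficiency $\Y$ is exactly what the bucketing proof of the uniform marginals lemma is engineered to handle, and is the real place the choice $m=n^{256}$ is spent---not in absorbing ``reweighting distortion.''
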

\vfill

\subsection{What does it mean?}

The upshot of our lifting theorem is that it \emph{automates} the task of proving randomized communication lower bounds: we only need to show a problem-specific query lower bound for $f$ (which is often relatively simple), and then invoke the general-purpose lifting theorem to completely characterize the randomized communication complexity of $f\circ \Ind_m^n$.

\paragraph{Separation results.}
The lifting theorem is especially useful for constructing examples of two-party functions that have large randomized communication complexity, but low complexity in some other communication model. For example, one of the main results of Anshu et al.~\cite{anshu16separations} is a nearly $2.5$-th power separation between randomized and quantum ($\BQPcc$) communication complexities for a total function $F$:
\begin{equation} \label{eq:sep}
\BPPcc(F)~\geq~\BQPcc(F)^{2.5-o(1)}.
\end{equation}
Previously, a quadratic separation was known (witnessed by set-disjointness). The construction of~$F$ (and its ad hoc analysis) in \cite{anshu16separations} was closely modeled after an analogous query complexity separation, $\BPPdt(f)\geq\BQPdt(f)^{2.5-o(1)}$, shown earlier by~\cite{aaronson16separations}. Our lifting theorem can reproduce the separation~\eqref{eq:sep} by simply taking $F\coloneqq f\circ \Ind_m^n$ and using the query result of~\cite{aaronson16separations} as a black-box. Here we only note that $\BQPcc(F)$ is at most a logarithmic factor larger than $\BQPdt(f)$, since a protocol can always efficiently simulate a decision tree.

In a similar fashion, we can unify (and in some cases simplify) several other existing results in communication complexity~\cite{raz99exponential,goos15randomized,anshu16separations,babichenko17communication}, including separations between $\BPPcc$ and the log of the partition number; see \autoref{sec:applications} for details. At the time of the writing, we are not aware of any \emph{new} applications implied by our lifting theorem.

\paragraph{Gadget size.}
A drawback with our lifting theorem is that it assumes gadget size $m=\poly(n)$, which limits its applicability. For example, we are not able to reproduce tight randomized lower bounds for important functions such as set-disjointness~\cite{kalyanasundaram92probabilistic,razborov92distributional,bar-yossef04information} or gap-Hamming~\cite{chakrabarti12optimal,sherstov12communication,vidick13concentration}. It remains an open problem to prove a lifting theorem for~$m=O(1)$ even for the models studied in~\cite{goos16rectangles,kothari17approximating}.

\begin{table}
\centering
\renewcommand{\arraystretch}{1.1}
\setlength{\tabcolsep}{3mm}
\begin{tabular}{lllp{5cm}<{\raggedright\small}}
\toprule[.5mm]
\bf Class & \bf Query model & \bf Communication model & \bf References \\
\midrule
$\P$ & deterministic & deterministic & 
\cite{raz99separation,goos15deterministic,rezende16limited,hatami16structure,wu17raz,chattopadhyay17composition} \\
$\NP$ & nondeterministic & nondeterministic & 
\cite{goos16rectangles,goos15lower} \\
\emph{many} & polynomial degree & rank & 
\cite{shi09quantum,sherstov11pattern,razborov10sign,robere16exponential} \\
\emph{many} & conical junta degree & nonnegative rank & 
\cite{goos16rectangles,kothari17approximating} \\
$\P^\NP$ & decision list & rectangle overlay & 
\cite{goos17query} \\
\midrule
& Sherali--Adams & LP extension complexity & 
\cite{chan16approximate,kothari17approximating} \\
& sum-of-squares & SDP extension complexity &
\cite{lee15lower} \\
\bottomrule[.5mm]
\end{tabular}
\caption{Query-to-communication lifting theorems. The first five are formulated in the language of boolean functions (as in this paper); the last two are formulated in the language of combinatorial optimization.}
\label{tab:lifting}
\end{table}

\section{Reformulation}

Our lifting theorem holds for all $f$, even if $f$ is a partial function or a general relation (search problem). Thus the theorem is not \emph{really} about the outer function at all; it is about the obfuscating ability of the index gadget $\Ind_m$ to hide information about the input bits of $f$. To focus on what is essential, let us reformulate the lifting theorem in a more abstract way that makes no reference to $f$.

\subsection{Slices}
Write $G\coloneqq g^n$ for $g\coloneqq \Ind_m$. We view $G$'s input domain $[m]^n\times(\{0,1\}^m)^n$ as being partitioned into \emph{slices} $G^{-1}(z)=\{(x,y):G(x,y)=z\}$, one for each $z\in\{0,1\}^n$; see (a) below. We will eventually consider \emph{randomized} protocols, but suppose for simplicity that we are given a \emph{deterministic} protocol $\Pi$ of communication cost $|\Pi|$. The most basic fact about $\Pi$ is that it induces a partition of the input domain into at most $2^{|\Pi|}$ rectangles (sets of the form $X\times Y$ where $X\subseteq[m]^n$, $Y\subseteq(\{0,1\}^m)^n$); see (b) below. The rectangles are in 1-to-1 correspondence with the leaves of the protocol tree, which are in 1-to-1 correspondence with the protocol's \emph{transcripts} (root-to-leaf paths; each path is a concatenation of messages). Fixing some $z\in\{0,1\}^n$, we are interested in the distribution over transcripts that is generated when $\Pi$ is run on a uniform random input from the slice $G^{-1}(z)$; see (c) below.
\begin{center}
\begin{lpic}[b(-1mm),l(-3mm),r(-3mm),t(-1mm)]{slices(.42)}
\lbl[c]{6,60;$[m]^n$}
\lbl[c]{70,110;$(\{0,1\}^m)^n$}
\normalsize
\lbl[c]{70,7;(a)}
\lbl[c]{190,7;(b)}
\lbl[c]{310,7;(c)}
\end{lpic}
\end{center}

\subsection{The reformulation}
We devise a \emph{randomized} decision tree that on input $z$ outputs a random transcript distributed close (in total variation distance) to that generated by $\Pi$ on input $(\x,\y)\sim G^{-1}(z)$. (We always use boldface letters for random variables.)

\begin{theorem} \label{thm:goal}
Let $\Pi$ be a deterministic protocol with inputs from the domain of $G=g^n$. There is a randomized decision tree of cost $O(|\Pi|/\log n)$ that on input $z\in\{0,1\}^n$ samples a random transcript (or outputs $\bot$ for failure) such that the following two distributions are $o(1)$-close:
\begin{align*}
\t_z~&\coloneqq~\text{output distribution of the randomized decision tree on input $z$}, \\[-1mm]
\t'_z~&\coloneqq~\text{transcript generated by $\Pi$ when run on a random input $(\x,\y)\sim G^{-1}(z)$}.
\end{align*}
Moreover, the simulation has ``one-sided error'': $\supp(\t_z)\subseteq\supp(\t'_z)\cup \{\bot\}$ for every $z$.
\end{theorem}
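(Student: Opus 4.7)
The plan is to simulate $\Pi$ message-bit by message-bit, maintaining state $(\fix, z_\fix, \rho)$ where $\fix\subseteq[n]$ collects the coordinates the decision tree has already queried, $z_\fix$ stores their values, and $\rho = X\times Y \subseteq [m]^{\free}\times(\{0,1\}^m)^{\free}$ is a rectangle over the unqueried coordinates $\free = [n]\setminus\fix$ encoding the inputs still consistent with the transcript so far and with $z_\fix$. Initialize $\fix=\emptyset$ and $X,Y$ equal to the full product sets. Throughout the simulation we enforce a \emph{density invariant} on $\rho$: for a small constant $\delta$, every $I\subseteq\free$ satisfies $\Hmin(\x_I) \geq (1-\delta)|I|\log m$ when $\x\sim X$ uniformly, together with a matching condition on $Y$ tailored to the index gadget so that the gadget outputs on free coordinates are, conditional on $\rho$ and $z_\fix$, close to uniform and statistically independent of the unknown $z_\free$.

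Each step consumes the next protocol bit. If Alice speaks, the associated function partitions $X=X_0\sqcup X_1$; sample $\b\in\{0,1\}$ with probability $|X_\b|/|X|$ and replace $X$ by $X_\b$. The density invariant forces this marginal to be close to the desired conditional $\Pr[\text{next bit}=\b\mid G(\x,\y)=z]$, so each step contributes only a negligible amount to the total variation. Bob's messages are processed symmetrically. Whenever an update breaks the invariant, we invoke a \emph{density-restoring partition}: isolate a subset $J\subseteq\free$ of ``leaked'' coordinates on which $X$ or $Y$ has dropped below the density threshold; query $z_J$ to commit $|J|$ fresh coordinates to $\fix$; use the residual density on $\free\setminus J$ together with the index structure to sample $(x_J, y_J)$ consistent with $g(x_{J,i}, y_{J,i}) = z_{J,i}$ for each $i\in J$; restrict $\rho$ to the resulting sub-rectangle and continue. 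If a restoration step would ever require an unacceptably large number of queries, abort and output $\bot$.

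The hard part is the density-restoring partition lemma together with an amortized accounting argument showing that every burst of $|J|$ new queries is preceded by $\Omega(|J|\log n)$ communication bits. This is where the choice $m=n^{256}$ pays off: the per-coordinate min-entropy budget is $\Theta(\log n)$ while a single message bit destroys at most one bit of min-entropy, giving an amortized rate of $\Omega(\log n)$ communication bits per committed coordinate, hence the overall cost $O(|\Pi|/\log n)$. Summing the tiny per-step total-variation errors over at most $|\Pi|$ steps, plus bounding the $\bot$-abort probability by concentration on the density drops, yields $\|\t_z-\t'_z\|_{\mathrm{TV}}=o(1)$. One-sided error is automatic: whenever the simulation outputs a non-$\bot$ transcript $\tau$, the maintained rectangle together with the committed choices on $\fix$ lies entirely in $G^{-1}(z)$, and since $\Pi$ is deterministic it produces $\tau$ on every such input, so $\tau\in\supp(\t'_z)$.
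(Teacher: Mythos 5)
Your high-level plan matches the paper's: simulate the protocol bit by bit, track a rectangle and a partial assignment, maintain a density invariant, restore density by fixing blocks and querying $z$ on them, and amortize the query cost against the communication. But two specific steps you wave at are, in fact, where all the work lies, and as stated they would fail.

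First, the amortization argument. You write that ``a single message bit destroys at most one bit of min-entropy,'' which is the claim powering your $\Omega(\log n)$ bits-per-query rate. This is false for min-entropy: conditioning on a single bit $\b$ with $\Pr[\b=b]=|X^b|/|X|$ can drop $\Hmin$ by $\log(|X|/|X^b|)$, which is unbounded in the worst case. What is true is only that the \emph{expected} drop is $O(1)$ (since $\Ex_b[\log(|X|/|X^b|)]\le\log 2\cdot 2/\ln 2$, or more simply $\Ex[2^{\text{drop}}]\le 2$). The paper therefore sets up a nonnegative potential $\Dmin(\X_{\free\rho})$, bounds the expected per-iteration increase, and upgrades to a high-probability bound via a coupling to i.i.d.\ overshoots with bounded exponential moment. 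You also need the density-restoring partition lemma to give a \emph{potential decrease} proportional to the number of newly fixed blocks while only paying a $\delta_i$ increase in expectation. Without replacing ``at most one bit'' with this expected/concentration accounting, the cost bound $O(|\Pi|/\log n)$ does not follow.

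Second, one-sided error. You argue that a non-$\bot$ transcript $\tau$ is in $\supp(\t'_z)$ because ``the maintained rectangle together with the committed choices on $\fix$ lies entirely in $G^{-1}(z)$.'' That cannot be right: only the fixed coordinates are forced to agree with $z$; on the free coordinates the rectangle still contains inputs producing every possible $z'_{\free}$. What you actually need is that the rectangle \emph{intersects} $G^{-1}(z)$, i.e.\ that $G(\X,\Y)$ restricted to the free blocks has $z_{\free}$ in its support. That is a nontrivial pseudorandomness statement (it is the content of the paper's uniform marginals lemma, and its precursor in prior work), and it can fail if $Y$ becomes too degenerate. The paper therefore halts with $\bot$ whenever $\Dmin(\Y)$ exceeds a threshold, which keeps the rectangle in the regime where the lemma applies and guarantees nonempty intersection.

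Finally, note that your invariant insists on density for both $X$ and $Y$, but the paper deliberately does not maintain a density condition on $Y$: because Bob's input is so much longer than Alice's, it suffices to show $\Dmin(\Y)$ stays polynomially bounded with high probability, and only $X$ needs to be actively density-restored. Trying to density-restore $Y$ as well is not obviously wrong, but it is a different design and you would need a new mechanism for it; the existing density-restoring partition is tuned to $X$ and to the index gadget's pointer structure. And the claim that the density invariant ``forces'' the sampled message bit to be close to the true conditional is precisely the uniform marginals lemma, which the paper proves via a Fourier/\textsc{Xor}-lemma calculation plus a bucketing decomposition to reduce Bob's polynomial deficiency to constant deficiency. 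That lemma is the technical heart of the theorem and cannot be assumed.
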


The lifting theorem (\autoref{thm:main}) follows as a simple consequence of the above reformulation. For the easy direction (``$\leq$''), any randomized decision tree for $f$ making $c$ queries can be converted into a randomized protocol for $f \circ g^n$ communicating $c \cdot O(\log n)$ bits, where the $O(\log n)$ factor is the deterministic communication complexity of the gadget. For the nontrivial direction (``$\geq$''), suppose we have a randomized protocol $\PI$ (viewed as a probability distribution over deterministic protocols) that computes $f \circ g^n$ (with error $\leq 1/3$, say) and each $\Pi\sim\PI$ communicates at most $|\Pi|\leq c$ bits. We convert this into a randomized decision tree for $f$ of query cost $O(c/\log n)$ as follows.

\medskip\noindent
{\slshape On input $z$:}
\begin{enumerate}[topsep=1mm]
\item Pick a deterministic~$\Pi\sim\PI$ (using random coins of the decision tree).
\item Run the randomized decision tree for $\Pi$ from \autoref{thm:goal} that samples a transcript $t\sim\t_z(\Pi)$.
\item Output the value of the leaf reached in $t$.
\end{enumerate}
\medskip
The resulting decision tree has bounded error on input $z$:
\begin{align*}
\Pr[\,\text{output of decision tree}\, \neq f(z)]
~&=~\Ex_{\Pi\sim \PI}\bigl[\Pr_{t\sim \t_z(\Pi)}[\,\text{value of leaf in $t$}\, \neq f(z)]\bigr]\\
~&=~\Ex_{\Pi\sim \PI}\bigl[\Pr_{t\sim \t'_z(\Pi)}[\,\text{value of leaf in $t$}\, \neq f(z)]\pm o(1)\bigr]\\
~&=~\Ex_{\Pi\sim \PI}\bigl[\Pr_{(\x,\y)\sim G^{-1}(z)}[\Pi(\x,\y) \neq f(z)]\bigr]\pm o(1)\\
~&=~\Ex_{(x,y)\sim G^{-1}(z)}\bigl[\Pr_{\PI}[\PI(x,y) \neq f(z)]\bigr]\pm o(1)\\
~&\leq~\Ex_{(x,y)\sim G^{-1}(z)}[1/3] \pm o(1)\\
~&\leq~1/3+o(1).
\end{align*}

\subsection{Extensions}
The correctness of our simulation hinged on the property of $\BPP$-type algorithms that the \emph{mixture of correct output distributions is correct}. In fact, the ``moreover'' part in \autoref{thm:goal} allows us to get a lifting theorem for \emph{one-sided error} ($\RP$-type) and \emph{zero-sided error} ($\ZPP$-type) query/communication complexity: if the randomized protocol $\PI$ on every input $(x,y)\in G^{-1}(z)$ outputs values in $\{f(z),\bot\}$, so does our decision tree simulation on input $z$. Funnily enough, it was previously known that the existence of a query-to-communication lifting theorem for $\ZPP$ (for index gadget) implies the existence of a lifting theorem for $\BPP$ in a black-box fashion~\cite{bendavid16randomized}. We also mention that \autoref{thm:goal} in fact holds with $1/\!\poly(n)$-closeness (instead of $o(1)$) for an arbitrarily high degree polynomial, provided $m$ is chosen to be a correspondingly high enough degree polynomial in $n$.

\section{Simulation}
We now prove \autoref{thm:goal}. Fix a deterministic protocol $\Pi$ henceforth. We start with a high-level sketch of the simulation, and then fill in the details.

\subsection{Executive summary}

\begin{wrapfigure}[9]{r}{6.5cm}
\begin{lpic}[t(-11mm),l(2mm),r(-10mm)]{marginals(.48)}
\large
\lbl[c]{132,60;$X$}
\lbl[c]{70,7;$Y$}
\lbl[c]{65,59;\rotatebox{-45}{$G^{-1}(z)\cap X\!\times\!Y$}}
\end{lpic}
\end{wrapfigure}

The randomized decision tree will generate a random transcript of $\Pi$ by taking a random walk down the protocol tree of $\Pi$, guided by occasional queries to the bits of $z$. The design of our random walk is dictated by one (and only one) property of the slice sets~$G^{-1}(z)$:

\begin{itemize}[leftmargin=\parindent]
\item[]
\emph{\bf Uniform marginals lemma (informal):}\\
For every $z\in\{0,1\}^n$ and every rectangle $X \times Y$ where $X$ is ``dense'' and~$Y$ is ``large'', the uniform distribution on $G^{-1}(z) \cap X \times Y$ has both of its marginal distributions close to uniform on $X$ and $Y$, respectively.
\end{itemize}

\noindent
This immediately suggests a way to \emph{begin} the randomized simulation. Each node of $\Pi$'s protocol tree is associated with a rectangle $X \times Y$ of all inputs that reach that node. We start at the root where, initially, $X\times Y = [m]^n\times(\{0,1\}^m)^n$. Suppose Alice communicates the first bit~$b \in \{0,1\}$. This induces a partition $X = X^0 \cup X^1$ where $X^b$ consists of those inputs where Alice sends~$b$. When $\Pi$ is run on a random input $(\x,\y)\sim G^{-1}(z)$, the above lemma states that $\x$ is close to uniform on~$X$ and hence the branch $X^b$ is taken with probability roughly~$|X^b|/|X|$. Our idea for a simulation is this: we pretend that $\x \sim X$ is perfectly uniform so that our simulation takes the branch~$X^b$ with probability exactly $|X^b|/|X|$. It follows that the first bit sent in the two scenarios ($\t_z$ and $\t'_z$) is distributed close to each other. We can continue the simulation in the same manner, updating~$X \leftarrow X^b$ (and similarly $Y \leftarrow Y^b$ when Bob speaks), as long as $X \times Y$ remains ``$\text{dense}\times\text{large}$''.

\paragraph{Largeness.}
A convenient property of the index gadget is that Bob's $nm$-bit input is much longer than Alice's $n\log m$-bit input. Consequently, the simulation will not need to go out of its way to maintain the ``largeness'' of Bob's set $Y$---we will argue that it naturally remains ``large'' enough with high probability throughout the simulation.

\paragraph{Density.}
The interesting case is when Alice's set $X$ ceases to be ``dense''. Our idea is to promptly restore ``density'' by computing a \emph{density-restoring} partition $X = \bigcup_i X^i$ with the property that each $X^i$ is fixed on some subset of blocks $I_i \subseteq [n]$ (which ``caused'' a density violation), and such that $X^i$ is again ``dense'' on the remaining blocks $[n]\smallsetminus I_i$. Moreover, $|I_i|$ will typically be bounded in terms of the number of bits communicated so far.

After Alice has partitioned $X=\bigcup_i X^i$ we will follow the branch $X^i$ (updating $X\leftarrow X^i$) with probability $|X^i|/|X|$; this random choice is justified by the uniform marginals lemma, since it imitates what would happen on a uniform random input from $G^{-1}(z)$. Since we made Alice's pointers~$X^i_{I_i}$ fixed, say, to value $\alpha\in[m]^{I_i}$, we need to fix the corresponding pointed-to bits on Bob's side so as to make the output of the gadgets $g^n(X^i,Y)$ consistent with $z$ on the fixed coordinates. At this point, our decision tree queries all the bits $z_{I_i}\in\{0,1\}^{I_i}$ and we argue that we can indeed typically restrict Bob's set to some still-``large'' $Y^i\subseteq Y$ to ensure $g^{I_i}(X^i_{I_i}\times Y^i_{I_i})=\{z_{I_i}\}$. Now that we have recovered ``density'' on the unfixed blocks, we may continue the simulation as before (relativized to unfixed blocks).

\subsection{Tools}
Let us make the notions of ``dense'' and ``large'' precise. Let $\Hmin(\x) \coloneqq \min_x\log(1/\Pr[\x=x])$ denote the usual min-entropy of a random variable $\x$. Supposing $\x$ is distributed over a set $X$, we define the \emph{deficiency} of $\x$ as the nonnegative quantity $\Dmin(\x)\coloneqq \log|X|- \Hmin(\x)$. A basic property, which we use freely and repeatedly throughout the proof, is that marginalizing $\x$ to some coordinates (assuming $X$ is a product set) cannot increase the deficiency. For a set $X$ we use the boldface $\X$ to denote a random variable uniformly distributed on $X$. 

\begin{definition}[Blockwise-density~\cite{goos16rectangles}] \label{def:dense}
A random variable $\x\in[m]^J$ (where $J$ is some index set) is called \emph{$\delta$-dense} if for every nonempty $I\subseteq J$, the blocks $\x_I$ have min-entropy rate at least $\delta$, that is, $\Hmin(\x_I) \geq \delta\cdot |I|\log m$. (Note that $\x_I$ is marginally distributed over $[m]^I$.)
\end{definition}

\begin{lemma}[Uniform marginals; simple version] \label{lem:simple}
Suppose $\X$ is $0.9$-dense and $\Dmin(\Y)\leq n^3$. Then for any $z\in\{0,1\}^n$, the uniform distribution on $G^{-1}(z) \cap X \times Y$ (which is nonempty) has both of its marginal distributions $1/n^2$-close to uniform on $X$ and $Y$, respectively.
\end{lemma}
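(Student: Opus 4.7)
Let $\mu$ be uniform on $S := G^{-1}(z)\cap X\times Y$, and for $x\in X$ set $Y_x := \{y\in Y : \pi_x(y) = z\}$ where $\pi_x(y) := (y_{i,x_i})_{i=1}^n$; then $\mu_X(x) = |Y_x|/|S|$. Write $p(x) := |Y_x|/|Y|$ and $\bar p := \Ex_{x\in X}[p(x)] = |S|/(|X||Y|)$, so $\mu_X(x) = p(x)/(\bar p\,|X|)$, and hence by Cauchy--Schwarz
\[
\mathrm{TV}\bigl(\mu_X,\text{Unif}(X)\bigr)
\;=\; \tfrac{1}{2\bar p}\,\Ex_{x\in X}\bigl[\,|p(x)-\bar p|\,\bigr]
\;\leq\; \tfrac{1}{2\bar p}\sqrt{\mathrm{Var}_{x\in X}(p)}.
\]
The analogous identity for $\mu_Y$ uses $q(y) := |X_y|/|X|$, where $X_y := X\cap\prod_i A_i(y,z)$ and $A_i(y,z) := \{j\in[m] : y_{i,j}=z_i\}$. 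It thus suffices to prove $\bar p \approx 2^{-n}$ (which also yields $S\neq\emptyset$) and $\mathrm{Var}_{x\in X}(p) \leq 4\bar p^2/n^4$, plus the symmetric bounds for $q$.

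\textbf{Moments via double counting.} Both moments of $p$ admit a sum-swap identity: expanding $|Y_x|^k$ as a count of $k$-tuples of $y$'s in $Y_x$ and summing over $x\in X$ yields
\[
\sum_{x\in X}|Y_x|^k \;=\; \sum_{y^{(1)},\ldots,y^{(k)}\in Y} \bigl|X\cap\textstyle\prod_i C_i^{(k)}\bigr|,\qquad C_i^{(k)} := \textstyle\bigcap_{j=1}^k\{j'\in[m] : y^{(j)}_{i,j'} = z_i\}.
\]
By largeness of $\Y$ together with Chernoff (valid since $\Dmin(\Y)\leq n^3 \ll nm$), all but a negligible fraction of tuples $(y^{(1)},\ldots,y^{(k)})\in Y^k$ satisfy $|C_i^{(k)}|\in[m/2^k - o(m),\, m/2^k + o(m)]$ in every block $i$. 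For such ``typical'' tuples I would prove the box-mass estimate $|X\cap\prod_i C_i^{(k)}|\approx |X|\prod_i(|C_i^{(k)}|/m)$ using the $0.9$-density of $\X$. Plugging in $k=1$ gives $\bar p\approx 2^{-n}$; plugging in $k=2$ gives $\Ex_{x\in X}[p(x)^2]\approx 4^{-n}\approx\bar p^2$, and hence $\mathrm{Var}(p)$ is sufficiently small.

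\textbf{Main obstacle.} The box-mass estimate $|X\cap\prod_i B_i|\approx|X|\prod_i(|B_i|/m)$ is the crux; it is \emph{not} an immediate consequence of the point-mass bound $\Pr[\X_I=\alpha]\leq m^{-0.9|I|}$ that defines $0.9$-density. I would derive it by expanding the indicator $[x\in\prod_i B_i] = \prod_i[x_i\in B_i]$ via inclusion--exclusion over index subsets $I\subseteq[n]$, bounding each error term using the collision-probability inequality $\sum_\alpha\Pr[\X_I=\alpha]^2\leq m^{-0.9|I|}$ implied by density. The generous exponent slack from $m=n^{256}$ lets these errors aggregate to $o(1/n^4)$. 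The $\mu_Y$ bound follows by the symmetric argument (density of $\X$ again controls $|X_y|$ pointwise for typical $y\in Y$).
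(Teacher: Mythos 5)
Your high-level plan — reduce $1/n^2$-closeness to a variance bound on $p(x)=|Y_x|/|Y|$ via Cauchy--Schwarz, then compute $\Ex_x[p]$ and $\Ex_x[p^2]$ by swapping sums over $x$ and tuples of $y$'s — is a genuinely different route from the paper's Fourier path, and the arithmetic setup is fine. The gap is exactly at the step you flag as the crux: the ``box-mass estimate'' $|X\cap\prod_i B_i|\approx|X|\prod_i(|B_i|/m)$. This is \emph{false} for $0.9$-dense $\X$ and boxes of the right size. Take $X=[m^{0.9}]^n$ (which has $\Hmin(\X_I)=0.9|I|\log m$ on every $I$, so exactly $0.9$-dense) and $B_i=[m/2]$. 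Then $X\subseteq\prod_iB_i$, so $|X\cap\prod_iB_i|=|X|$, while $|X|\prod_i(|B_i|/m)=|X|\,2^{-n}$, off by a factor $2^n$. Your ``typicality'' extracted from Chernoff constrains only the \emph{sizes} $|C_i^{(k)}|$, not where the sets sit relative to $X$, so it cannot rule out this configuration. The inclusion--exclusion repair does not help either: expanding $\prod_i[x_i\in B_i]$ around $\prod_i p_i$ and bounding the $I$-th correlation term by Cauchy--Schwarz with $\sum_\alpha\Pr[\X_I=\alpha]^2\le m^{-0.9|I|}$ gives $|\Ex_\X\prod_{i\in I}([\X_i\in B_i]-p_i)|\le m^{-0.45|I|}\cdot(mp_i(1-p_i))^{|I|/2}\approx m^{0.05|I|}2^{-|I|}$, which \emph{grows} with $|I|$; summing over $I$ then blows up by a factor of order $(1+m^{0.05})^n$. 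A larger $m$ makes this worse, not better, so the exponent slack from $m=n^{256}$ is not doing what you hope.

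What you are missing, and what the paper supplies, is that the boxes $A_i(y)$ arising from $y\sim\Y$ are not arbitrary boxes of size $\approx m/2$: the constraint $\Dmin(\Y)\le n^3$ must be used to argue that these boxes do not correlate adversarially with $X$. The paper handles this by conditioning on a ``bucket'' decomposition of $\Y$: it writes $\Y$ as a mixture so that on almost every piece, the $m^{1/2}$-bit windows Alice's pointer can land in have deficiency $\le1$, while Alice's pointer restricted to those windows remains dense; only then does a discrepancy/XOR-lemma calculation go through (\autoref{lem:pointwise-unif}, \autoref{lem:reduce}). In your framework this is precisely the ingredient that would make the $A_i(y)$'s behave pseudorandomly against $X$; without it, your first- and second-moment estimates (and in particular the nonemptiness of $S$) do not follow.
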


We postpone the proof of the lemma to \autoref{sec:pseudorandomness}, and instead concentrate here on the simulation itself---its correctness will mostly rely on this lemma. Actually, we need a slightly more general-looking statement that we can easily apply when some blocks in $X$ have become fixed during the simulation. To this end, we introduce terminology for such rectangles $X\times Y$. Note that \autoref{lem:general} below specializes to \autoref{lem:simple} by taking $\rho=*^n$.

\begin{definition}[Structured rectangles]
For a partial assignment $\rho\in\{0,1,*\}^n$, define its \emph{free} positions as $\free\rho\coloneqq\rho^{-1}(*)\subseteq[n]$, and its \emph{fixed} positions as $\fix\rho\coloneqq [n]\smallsetminus\free\rho$. A rectangle $X\times Y$ is called \emph{$\rho$-structured} if $\X_{\free\rho}$ is $0.9$-dense, $\X_{\fix\rho}$ is fixed, and each output in $G(X\times Y)$ is consistent with $\rho$.
\end{definition}

\vspace{-3mm}
\begin{restatable}[Uniform marginals; general version]{lemma}{unifmarg} \label{lem:general}
Suppose $X\times Y$ is $\rho$-structured and $\Dmin(\Y)\leq n^3$. Then for any $z\in\{0,1\}^n$ consistent with $\rho$, the uniform distribution on $G^{-1}(z) \cap X \times Y$ (which is nonempty) has both of its marginal distributions $1/n^2$-close to uniform on $X$ and $Y$, respectively.
\end{restatable}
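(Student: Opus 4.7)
The plan is to project away the fixed coordinates (reducing to a setting mirroring \autoref{lem:simple}), compute the two marginals as normalized slice-counts, and prove those counts are nearly constant via the extractor behavior of the index gadget.

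\paragraph{Reduction to free coordinates.} Set $J \coloneqq \free\rho$, $K \coloneqq \fix\rho$, and $k \coloneqq |J|$. By $\rho$-structuredness there is $\alpha \in [m]^K$ with $x_K = \alpha$ for every $x \in X$, so $X = \{\alpha\} \times X'$ with $X' \subseteq [m]^J$ being $0.9$-dense. Since every $y \in Y$ is $\rho$-consistent and $z$ agrees with $\rho$ on $K$, the slice condition $G(x,y) = z$ is automatic on $K$ and reduces on $J$ to $(y_i)_{x'_i} = z_i$. Thus $G^{-1}(z) \cap X \times Y$ is in bijection with $S \coloneqq \{(x',y) \in X' \times Y : (y_i)_{x'_i} = z_i\ \forall i \in J\}$, and it suffices to prove that $S$ is nonempty and that the uniform distribution on $S$ has $\x'$-marginal $1/n^2$-close to uniform on $X'$ and $\y$-marginal $1/n^2$-close to uniform on $Y$. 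Since $\X'$ is $0.9$-dense on $[m]^J$ and $\Dmin(\Y) \leq n^3$, this is structurally the setting of \autoref{lem:simple}, just restricted to the free block set $J$.

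\paragraph{Marginals as extractor outputs.} Define $q(x') \coloneqq |\{y \in Y : (y_i)_{x'_i} = z_i\ \forall i \in J\}|$ and $p(y) \coloneqq |\{x' \in X' : (y_i)_{x'_i} = z_i\ \forall i \in J\}|$; the two marginals are $q(x')/|S|$ and $p(y)/|S|$. A short TV-distance calculation reduces the goal to showing $q(x') = (1 \pm 1/n^3) \cdot |Y|/2^k$ for all $x'$ outside a subset of $X'$-mass at most $1/(4n^2)$, and the analogous statement for $p$. Rewriting $q(x')/|Y| = \Pr_{\y \sim Y}[(\y_i)_{x'_i} = z_i\ \forall i \in J]$, the $q$-statement is precisely that for most seeds $x' \in X'$, indexing into the low-deficiency source $\Y$ produces an output $1/n^3$-close to uniform on $\{0,1\}^J$. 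This follows from a standard leftover-hash / XOR-lemma argument: the total deficiency $n^3$ of $\Y$ is dwarfed by the per-block entropy $\log m = 256 \log n$, so after fixing any bad seeds block by block via the chain rule the residual source still has plenty of min-entropy, and the $0.9$-density of $\X'$ ensures that almost no $X'$-mass lands on bad seeds. The $p$-statement is symmetric: for ``balanced'' $y \in Y$ --- i.e., $y$ whose blocks have bit-frequencies close to $1/2$, which is all but a $1/(4n^2)$-mass subset of $Y$ by the deficiency bound on $\Y$ --- the blockwise-dense source $\X'$ indexes into $y$ to produce $k$ bits $1/n^3$-close to uniform, again by the gadget's extractor behavior.

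\paragraph{Main obstacle.} The technical weight of the proof is concentrated in the two extractor statements above, whose correctness depends on a delicate balance of the four parameters $(0.9,\ n^3,\ n^{256},\ 1/n^2)$: the deficiency of $\X'$ on $[m]^J$ is at most $0.1 k \log m = O(k \log n)$ (so fixing a bad block costs only $O(\log n)$ bits), and the deficiency of $\Y$ is at most $n^3 \ll nm = n^{257}$ (so $\Y$ looks essentially uniform from the vantage of a single block). These slack factors are precisely what let the block-by-block chain-rule / union-bound bookkeeping go through and deliver $1/n^2$-closeness; the polynomial gadget size $m = n^{256}$ is chosen to bankroll exactly this slack. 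Verifying that all these constants mesh simultaneously --- in particular that passing from one block to the next does not catastrophically erode the per-block min-entropy on either side --- is where I expect the hardest part of the proof to lie, and presumably is the content of \autoref{sec:pseudorandomness}.
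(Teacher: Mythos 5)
Your high-level read is on target — project out the fixed coordinates, recognize that the burden falls on an extractor-type property of $\Ind_m$, and note that the $(0.9,\,n^3,\,n^{256},\,1/n^2)$ parameters are tuned exactly to make it go through — but the specific route you sketch diverges from the paper's and has a gap in the reduction step. The paper does \emph{not} analyze per-point slice counts. It first proves the stronger \autoref{lem:pointwise-unif}: for $\rho$-essentially-structured $X\times Y$ with $\Dmin(\Y)\le n^3+1$, the \emph{joint} output $G(\X,\Y)$ is $1/n^3$-pointwise-close to uniform over the $z$ consistent with $\rho$. It then derives \autoref{lem:general} by a test-event argument: for any $E\subseteq X$ one may assume $|E|\ge|X|/2$, so $E\times Y$ is still $\rho$-essentially-structured (deficiency grows by at most $1$), and applying \autoref{lem:pointwise-unif} to both $E\times Y$ and $X\times Y$ yields
\[
\Pr[\x\in E]~=~\frac{|G^{-1}(z)\cap E\times Y|}{|G^{-1}(z)\cap X\times Y|}~=~\frac{(1\pm 1/n^3)\,2^{-|\free\rho|}\,|E\times Y|}{(1\pm 1/n^3)\,2^{-|\free\rho|}\,|X\times Y|}~=~\frac{|E|}{|X|}\pm\frac{1}{n^2}.
\]
This is three lines, and the entire technical weight is pushed onto \autoref{lem:pointwise-unif}, proved in \autoref{sec:pseudorandomness} via a Fourier/operator-norm XOR bound combined with a bucket decomposition that cuts $\Dmin(\Y)$ down to $\le 1$.

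The gap in your proposal is the step ``a short TV-distance calculation reduces the goal to showing $q(x')=(1\pm 1/n^3)|Y|/2^k$ for all $x'$ outside a subset of $X'$-mass $\le 1/(4n^2)$.'' That reduction is not valid as stated: the bad set has small mass under \emph{uniform} $X'$, but you have no control on its $q$-values, so it can carry a disproportionate share of $|S|=\sum_{x'}q(x')$ and hence of the $\x'$-marginal. Indeed, on a bad $x'$ one can have $q(x')$ as large as $|Y|$ itself (take $Y$ fixing some bits to match $z$ and $x'$ pointing at them), which exceeds the good-set value $|Y|/2^k$ by a factor $2^k$, so a bad set of mass $1/(4n^2)$ already dominates once $k\gtrsim\log n$. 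To repair this you would need either a uniform upper bound on $q(x')$ over \emph{all} $x'$ (false in general) or a bad-set mass bound that scales with $2^{-k}$, neither of which your outline supplies. The paper's test-event argument never isolates a per-$x'$ quantity and thereby avoids the issue entirely. As a secondary point, your characterization of the $\y$-marginal via ``balanced'' $y$ (per-block bit-frequencies near $1/2$) is not the right notion: it is neither what \autoref{lem:pointwise-unif}'s proof uses nor sufficient on its own, since a $0.9$-dense $\X'_i$ can still concentrate on a biased $m^{0.9}$-size subset of a balanced block. The actual proof handles $\Y$ via the bucket decomposition, not a combinatorial property of individual $y$'s.
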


\begin{center}
\begin{lpic}[b(5mm),t(1mm),r(3mm)]{structured(.25)}

\normalsize
\lbl[c]{190,-10;\slshape Illustration of $\x\sim X$ and $\y\sim Y$ where $X\times Y$ is $\rho$-structured for $\rho\coloneqq10\!*\!*$}

\large
\lbl[c]{38,150;$x_1$}
\lbl[c]{38,110;$x_2$}
\lbl[c]{38,70;$\x_3$}
\lbl[c]{38,30;$\x_4$}

\small
\lbl[c]{9,130;\rotatebox{90}{\itshape fixed}}
\lbl[c]{9,50;\rotatebox{90}{\itshape dense}}

\large
\lbl[l]{325,148;$=\y_1$}
\lbl[l]{325,108;$=\y_2$}
\lbl[l]{325,68;$=\y_3$}
\lbl[l]{325,28;$=\y_4$}

\normalsize
\lbl[c]{290,150;$1$}
\lbl[c]{210,110;$0$}

\small
\lbl[c]{130,150;$*$}
\lbl[c]{150,150;$*$}
\lbl[c]{170,150;$*$}
\lbl[c]{190,150;$*$}
\lbl[c]{210,150;$*$}
\lbl[c]{230,150;$*$}
\lbl[c]{250,150;$*$}
\lbl[c]{270,150;$*$}
\lbl[c]{310,150;$*$}

\lbl[c]{130,110;$*$}
\lbl[c]{150,110;$*$}
\lbl[c]{170,110;$*$}
\lbl[c]{190,110;$*$}
\lbl[c]{230,110;$*$}
\lbl[c]{250,110;$*$}
\lbl[c]{270,110;$*$}
\lbl[c]{290,110;$*$}
\lbl[c]{310,110;$*$}

\lbl[c]{130,70;$*$}
\lbl[c]{150,70;$*$}
\lbl[c]{170,70;$*$}
\lbl[c]{190,70;$*$}
\lbl[c]{210,70;$*$}
\lbl[c]{230,70;$*$}
\lbl[c]{250,70;$*$}
\lbl[c]{270,70;$*$}
\lbl[c]{290,70;$*$}
\lbl[c]{310,70;$*$}

\lbl[c]{130,30;$*$}
\lbl[c]{150,30;$*$}
\lbl[c]{170,30;$*$}
\lbl[c]{190,30;$*$}
\lbl[c]{210,30;$*$}
\lbl[c]{230,30;$*$}
\lbl[c]{250,30;$*$}
\lbl[c]{270,30;$*$}
\lbl[c]{290,30;$*$}
\lbl[c]{310,30;$*$}

\end{lpic}
\end{center}

\subsection{Density-restoring partition} \label{sec:density-restoring}

Fix some set $X\subseteq[m]^J$. (In our application, $J\subseteq [n]$ will correspond to the set of free blocks during the simulation.) We describe a procedure that takes $X$ and outputs a \emph{density-restoring} partition $X=\bigcup_i X^i$ such that each $\X^i$ is fixed on some subset of blocks $I_i\subseteq J$ and $0.9$-dense on $J\smallsetminus I_i$. The procedure associates a \emph{label} of the form ``$x_{I_i}=\alpha_i$'' with each part~$X_i$, recording which blocks we fixed and to what value. If $\X$ is already $0.9$-dense, the procedure outputs just one part: $X$ itself.

\medskip\noindent
{\slshape While $X$ is nonempty:}
\begin{enumerate}[topsep=1mm]
\item Let $I\subseteq J$ be a \emph{maximal} subset (possibly $I=\emptyset$) such that $\X_I$ has min-entropy rate $<0.9$, and let $\alpha\in[m]^I$ be an outcome witnessing this: $\Pr[\X_I=\alpha]>m^{-0.9|I|}$.
\item Output part $X^{(x_I=\alpha)}\coloneqq\{x \in X: x_I=\alpha\}$ with label ``$x_I=\alpha$''.
\item Update $X\leftarrow X\smallsetminus X^{(x_I=\alpha)}$.
\end{enumerate}

\begin{center}
\begin{tikzpicture}[xscale=2.5,yscale=1.5]
\tikzset{
	inner sep=0,outer sep=2,
	leaf/.style={color=black,fill=white!80!PineGreen,rectangle,
	rounded corners=3pt,minimum height=4mm,minimum width=4mm,inner sep=4}}

\large
\node (start) at (-1,0) {$X$};
\node (e) at (4,0) {$\emptyset$};

\normalsize
\node (a) at (0,0) {$x_{I_1}$?};
\node (b) at (1,0) {$x_{I_2}$?};
\node (c) at (2,0) {$x_{I_3}$?};
\node (d) at (3,0) {$x_{I_4}$?};

\large
\tikzset{
  inner sep=0,outer sep=0}
	
\node[leaf] (o1) at (0,-1) {$X^1$};
\node[leaf] (o2) at (1,-1) {$X^2$};
\node[leaf] (o3) at (2,-1) {$X^3$};
\node[leaf] (o4) at (3,-1) {$X^4$};

\normalsize
\node (l) at (0,-1.37) {``$x_{I_1}\!=\alpha_1$''};
\node (l) at (1,-1.37) {``$x_{I_2}\!=\alpha_2$''};
\node (l) at (2,-1.37) {``$x_{I_3}\!=\alpha_3$''};
\node (l) at (3,-1.37) {``$x_{I_4}\!=\alpha_4$''};

\small
\draw[-{Stealth[length=7pt]},line width=1pt,pos=0.35,inner sep=4,color=gray]
(start) edge (a)
(a) edge node[above,pos=0.45] {$\neq \alpha_1$} (b)
(b) edge node[above,pos=0.45] {$\neq \alpha_2$} (c) 
(c) edge node[above,pos=0.45] {$\neq \alpha_3$} (d)
(d) edge node[above,pos=0.45] {$\neq \alpha_4$} (e)
(a) edge node[right] {$= \alpha_1$} (o1) 
(b) edge node[right] {$= \alpha_2$} (o2)
(c) edge node[right] {$= \alpha_3$} (o3)
(d) edge node[right] {$= \alpha_4$} (o4);
\end{tikzpicture}
\end{center}

We collect below the key properties of the partition $X=\bigcup_i X^i$ output by the procedure. Firstly, the partition indeed restores blockwise-density for the unfixed blocks. Secondly, the deficiency (relative to unfixed blocks) typically decreases proportional to the number of blocks we fixed.
\begin{lemma} \label{lem:density-restoring}
Each $X^i$ (labeled ``$x_{I_i}=\alpha_i$'') in the density-restoring partition satisfies the following.\nopagebreak
\begin{itemize}[topsep=2mm,leftmargin=28mm,itemsep=.5mm]
\item[\slshape (Density):~]
$\X^i_{J\smallsetminus I_i}$ is $0.9$-dense
\item[\slshape (Deficiency):~]
$\Dmin(\X^i_{J\smallsetminus I_i}) \leq \Dmin(\X) - 0.1|I_i|\log m + \delta_i$~~where~~$\delta_i \coloneqq \log (|X|/|\cup_{j\geq i} X^j|)$
\end{itemize}
\end{lemma}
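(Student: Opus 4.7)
The plan is to establish both bullets for each step $i$ by reasoning about the ``residual'' set $X^{(i)} \coloneqq \bigcup_{j\geq i} X^j$ from which $X^i$ is carved out at step $i$ by fixing $x_{I_i} = \alpha_i$. Since $X^i$ is exactly the set of $x \in X^{(i)}$ with $x_{I_i} = \alpha_i$, the uniform $\X^i$ is the conditioning of $\X^{(i)}$ on the event $\X^{(i)}_{I_i} = \alpha_i$. The defining property of the pair $(I_i, \alpha_i)$ chosen by the procedure is that $I_i \subseteq J$ is \emph{maximal} subject to $\Pr[\X^{(i)}_{I_i} = \alpha_i] > m^{-0.9 |I_i|}$.

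\emph{Density.} I would prove this by contradiction, exploiting the maximality of $I_i$. Suppose some nonempty $I' \subseteq J \smallsetminus I_i$ witnesses failure of $0.9$-density of $\X^i_{J\smallsetminus I_i}$ via some $\beta \in [m]^{I'}$ with $\Pr[\X^i_{I'} = \beta] > m^{-0.9|I'|}$. By the chain rule and the definition of $\X^i$ as a conditioning,
\[
\Pr[\X^{(i)}_{I_i \cup I'} = (\alpha_i, \beta)] ~=~ \Pr[\X^{(i)}_{I_i} = \alpha_i] \cdot \Pr[\X^i_{I'} = \beta] ~>~ m^{-0.9(|I_i| + |I'|)} ~=~ m^{-0.9|I_i \cup I'|},
\]
where disjointness of $I_i$ and $I'$ gives the final equality. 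Thus $I_i \cup I'$ is a strictly larger set carrying the same kind of high-probability witness, contradicting the maximality of $I_i$ selected in step 1.

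\emph{Deficiency.} This I would prove by direct computation. Because $x_{I_i}$ is fixed on $X^i$, the projection to coordinates $J\smallsetminus I_i$ is a bijection, so $\Hmin(\X^i_{J\smallsetminus I_i}) = \log|X^i|$ and hence $\Dmin(\X^i_{J\smallsetminus I_i}) = (|J| - |I_i|)\log m - \log|X^i|$. Writing $|X^i| = |X^{(i)}| \cdot \Pr[\X^{(i)}_{I_i} = \alpha_i]$ and combining $|X^{(i)}| = |X| \cdot 2^{-\delta_i}$ (by definition of $\delta_i$) with the witness bound $\Pr[\X^{(i)}_{I_i} = \alpha_i] > m^{-0.9|I_i|}$ yields $\log|X^i| > \log|X| - \delta_i - 0.9 |I_i| \log m$. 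Substituting this and using $\Dmin(\X) = |J|\log m - \log|X|$ gives the claimed $\Dmin(\X^i_{J\smallsetminus I_i}) < \Dmin(\X) - 0.1|I_i|\log m + \delta_i$; the edge case $I_i = \emptyset$ (trivial witness $\alpha_i = ()$) is handled by inspection with equality.

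The main obstacle is really the density part: one has to notice that a putative density violation on $\X^i_{J\smallsetminus I_i}$ multiplies cleanly with the defining witness $(I_i, \alpha_i)$ to produce a strictly larger witnessing set for $\X^{(i)}$, which is then killed by the maximal choice of $I_i$. Once that conditioning-and-multiplication move is in place, the deficiency bound is essentially bookkeeping with the definitions of $\Dmin$, $\delta_i$, and the witness inequality.
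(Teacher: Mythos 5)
Your proof is correct and follows essentially the same approach as the paper: density by contradiction via the maximality of $I_i$ (multiplying the selected witness probability with a putative violating witness on $J\smallsetminus I_i$), and deficiency by direct bookkeeping using $|X^i| = |X^{\geq i}|\cdot\Pr[\X^{\geq i}_{I_i}=\alpha_i]$ together with the witness bound. The only differences are notational ($X^{(i)}$ vs.\ $X^{\geqslant i}$, $I'$ vs.\ $K$) and your explicit remark about the $I_i=\emptyset$ edge case, which the paper handles implicitly by using $\le$ rather than strict inequality.
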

\begin{proof}
Write $X^{\geqslant i}\coloneqq\bigcup_{j\geq i}X^j$ so that $\X^i = (\X^{\geqslant i}\mid \X^{\geqslant i}_{I_i}=\alpha_i)$. Suppose for contradiction that some part~$\X^i$ was not $0.9$-dense on $J\smallsetminus I_i$. Then there is some nonempty $K\subseteq J\smallsetminus I_i$ and an outcome $\beta\in[m]^K$ violating the min-entropy condition: $\Pr[\X^i_K=\beta]>m^{-0.9|K|}$. But this contradicts the maximality of~$I_i$ since the larger set $I_i\cup K$ now violates the min-entropy condition for $\X^{\geqslant i}$:
\[
\Pr[\X^{\geqslant i}_{I_i\cup K}=\alpha_i\beta]
~=~\Pr[\X^{\geqslant i}_{I_i}=\alpha_i]\cdot\Pr[\X^i_K=\beta]
~>~m^{-0.9|I_i|}\cdot m^{-0.9|K|}
~=~m^{-0.9|I_i\cup K|}.
\]
This proves the first part. The second part is a straightforward calculation (intuitively, going from $X$ to $X^{\geqslant i}$ causes a $\delta_i$ increase in deficiency, going from $X^{\geqslant i}$ to $X^i$ causes a $\le 0.9|I_i|\log m$ increase, and restricting from $J$ to $J\smallsetminus I_i$ causes a $|I_i|\log m$ decrease):
\begin{align*}
\Dmin(\X^i_{J\smallsetminus I_i})~&=~|J\smallsetminus I_i|\log m-\log|X^i|\\
&\le~\bigl(|J|\log m-|I_i|\log m\bigr)-\log\bigl(|X^{\geqslant i}|\cdot 2^{-0.9|I_i|\log m}\bigr)\\
&=~\bigl(|J|\log m-\log|X|\bigr)-0.1|I_i|\log m+\log\bigl(|X|/|X^{\geqslant i}|\bigr)\\
&=~\Dmin(\X)-0.1|I_i|\log m+\delta_i.\qedhere
\end{align*}
\end{proof}

\subsection{The simulation}

To describe our simulation in a convenient language, we modify the deterministic protocol $\Pi$ into a \emph{refined} deterministic protocol $\aPi$; see \autoref{fig:protocol}. Namely, we insert two new rounds of communication whose sole purpose is to restore density for Alice's free blocks by fixing some other blocks and Bob's corresponding bits. In short, we maintain the rectangle $X\times Y$ as $\rho$-structured for some $\rho$. Each communication round of $\Pi$ is thus replaced with a whole \emph{iteration} in $\aPi$. The new communication rounds do not affect the input/output behavior of the original protocol: any transcript of $\aPi$ can be projected back to a transcript of~$\Pi$ (by ignoring messages sent on lines 14, 16). One way to think about $\aPi$ is that it induces a partition of the communication matrix that is a \emph{refinement} of the one $\Pi$ induces. Therefore, for the purpose of proving \autoref{thm:goal}, we can concentrate on simulating $\aPi$ in place of $\Pi$. The randomized decision tree becomes simple to describe relative to $\aPi$; see \autoref{fig:decision-tree}.

Next, we proceed to show that our randomized decision tree is \emph{(1)} correct: on input $z$ it samples a transcript distributed close to that of $\aPi$ when run on $(\x,\y)\sim G^{-1}(z)$, and \emph{(2)} efficient: the number of queries it makes is bounded in terms of $|\Pi|$ (the number of iterations in $\aPi$).

\begin{figure}[p]
\newcommand{\myline}{
\vspace{-1mm}\hspace{-7mm}
\begin{tikzpicture}
\draw [thick,dashed] (0,0) -- (15.1,0);
\end{tikzpicture}\vspace{0.5mm}}

\newcommand{\myprint}[1]{\hspace{-3.55mm}\raisebox{.4mm}{\scriptsize#1}\hspace{1.35mm}}
\newcommand{\mymark}{\myprint{$\vartriangleright$}}
\newcommand{\mymarkalt}{\myprint{$\blacktriangleright$}}

\algblockdefx[While]{While}{EndWhile}{{\bf while}~}{{\bf end while}}

\begin{mdframed}
{\bf\slshape\large Refined protocol\hspace{1pt} $\aPi$ on input $(x,y)$:}
\vspace{2.5mm}
\begin{algorithmic}[1]
\State initialize:~~ $v = \text{root of $\Pi$}$,~~ $X\times Y=[m]^n\times(\{0,1\}^m)^n$,~~ {\color{Maroon}$\rho = *^n$}
\While {{$v$ is not a leaf}}~~
{\color{Maroon}
{\bf[}\,invariant: $X\times Y$ is $\rho$-structured\,{\bf]}
}
\State let $v_0$, $v_1$ be the children of $v$
\If {{Bob sends a bit at $v$}}
\State let $Y=Y^0\cup Y^1$ be the partition according to Bob's function at $v$
\State let $b$ be such that $y\in Y^b$
\State \mymark\underline{Bob sends $b$} and we update $Y\leftarrow Y^b$, $v\leftarrow v_b$
\Else {{ Alice sends a bit at $v$}}
\State let $X=X^0\cup X^1$ be the partition according to Alice's function at $v$
\State let $b$ be such that $x\in X^b$
\State \mymark\underline{Alice sends $b$} and we update $X\leftarrow X^b$, $v\leftarrow v_b$
{\color{Maroon}
\Statex \myline
\State let $X=\bigcup_i X^i$ be such that \smash{$X_{\free\rho}=\bigcup_i X^i_{\free\rho}$} is a density-restoring partition
\State let $i$ be such that $x\in X^i$ and suppose \smash{$X^i_{\free\rho}$} is labeled ``$x_I=\alpha$'', $I\subseteq\free\rho$
\State \mymark\underline{Alice sends $i$} and we update $X\leftarrow X^i$
\State let $s = g^I(\alpha,y_I)\in\{0,1\}^I$
\State \mymarkalt\underline{Bob sends $s$} and we update $Y\leftarrow \{y'\in Y: g^I(\alpha,y'_I)=s\}$, $\rho_I\leftarrow s$
\Statex \myline
}
\EndIf
\EndWhile
\State output the value of the leaf $v$
\end{algorithmic}
\end{mdframed}

\caption{The refined (deterministic) protocol $\aPi$. The protocol explicitly keeps track of a rectangle $X\times Y$ consisting of all inputs that reach the current node (i.e., produce the same transcript so far). The original protocol $\Pi$ can be recovered by simply ignoring lines~12--16 and text in {\color{Maroon}\bf red}. The purpose of lines 12--16 is to maintain the invariant; they do not affect the input/output behavior.}\label{fig:protocol}
\end{figure}

\begin{figure}[p]
\begin{mdframed}
{\bf\slshape\large Randomized decision tree on input $z$:}
\vspace{3mm}

To generate a transcript of $\aPi$ we take a random walk down $\aPi$'s protocol tree, guided by queries to the bits of~$z$. The following defines the distribution of messages to send at each underlined line.
\begin{description}[itemsep=1mm,leftmargin=7mm]
\item[Lines marked `$\vartriangleright$':] We simulate an iteration of the protocol $\aPi$ pretending that $\x\sim X$ and $\y\sim Y$ are uniformly distributed over their domains. Namely, in line 7, we send~$b$ with probability~$|Y^b|/|Y|$; in line 11, we send $b$ with probability $|X^b|/|X|$; in line 14 (after having updated $X\leftarrow X^b$), we send~$i$ with probability $|X^i|/|X|$.
\item[Line marked `$\blacktriangleright$':] Here we query $z_I$ and send \emph{deterministically} the message~$s=z_I$; except if this message is impossible to send (because $z_I\notin g^I(\alpha,Y_I)$), we output $\bot$ and halt the simulation with failure.
\end{description}
\end{mdframed}

\caption{The randomized decision tree with query access to $z$. Its goal is to generate a random transcript of $\aPi$ that is $o(1)$-close to the transcript generated by $\aPi$ on a random input $(\x,\y)\sim G^{-1}(z)$.}
\label{fig:decision-tree}%
\end{figure}

\subsection{Correctness: Transcript distribution}
We show that for every $z\in\{0,1\}^n$ the following distributions are $o(1)$-close:
\begin{align*}
\t~&\coloneqq~\text{transcript generated by our simulation of\, $\aPi$ with query access to $z$},\\
\t'~&\coloneqq~\text{transcript generated by $\aPi$ when run on a random input from $G^{-1}(z)$}.
\end{align*}

The following is the heart of the argument.

\begin{lemma} \label{lem:iteration}
Consider a node $v$ at the beginning of an iteration in $\aPi$'s protocol tree, such that $z$ is consistent with the associated $\rho$. Suppose $X\times Y$ is the $\rho$-structured rectangle at $v$, and assume that $\Dmin(\Y)\le n^3$. Let $\m$ and $\m'$ denote the messages sent in this iteration under $\t$ and $\t'$ respectively (conditioned on reaching $v$). Then
\begin{itemize}
\item[(i)] $\m$ and $\m'$ are $1/n^2$-close,
\item[(ii)] with probability at least $1-4/n^2$ over $\m$, at least a $2^{-(n\log m+2)}$ fraction of $Y$ is retained.
\end{itemize}
\end{lemma}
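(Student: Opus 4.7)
Part (i) follows from Lemma~\ref{lem:general} via the data-processing inequality. In the Bob case, $b$ is a function of $\y$: its distribution under $\t$ is the push-forward of $U(Y)$, and under $\t'$ of the $\y$-marginal of $U(G^{-1}(z)\cap X\times Y)$. By Lemma~\ref{lem:general} these source distributions are $1/n^2$-close in TV, and data processing transfers this bound to $\m$. In the Alice case the same reduction works once one observes that the entire triple $\m=(b,i,s)$ is a deterministic function of $\x$: the branch $b(x)$ and density-restoring index $i(x)$ are functions of $x$, and $s=z_I$ is determined by $x$ (via $I=I(x)$) and the fixed input $z$, with the convention $\m=\bot$ when $z_I\notin g^I(\alpha,Y_I)$. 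Applying Lemma~\ref{lem:general} to the $\x$-marginals and invoking data processing finishes (i); the value $\bot$, impossible under $\t'$, is automatically absorbed into the TV distance.

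For part (ii), the Bob case is an immediate union bound: at most two branches $b$ can satisfy $|Y^b|/|Y|<2^{-(n\log m+2)}$, so their total $\t$-probability is at most $2\cdot 2^{-(n\log m+2)}\ll 4/n^2$. In the Alice case I split the bad event into (a) $\m=\bot$, which has $\t$-probability at most $1/n^2$ by part (i), and (b) $\m\ne\bot$ but $|Y'|/|Y|<2^{-(n\log m+2)}$. For (b), write $Y^{(I,\alpha,s)}:=\{y\in Y:g^I(\alpha,y_I)=s\}$ and let $B\subseteq X$ be those $x$ whose label $(I(x),\alpha(x))$ in the density-restoring partition of $X^{b(x)}$ is \emph{bad}, i.e.\ $Y^{(I,\alpha,z_I)}$ is nonempty with fractional size $<2^{-(n\log m+2)}$; then $\Pr_\t[(b)]=|B|/|X|$.

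The key structural observation is that whenever $x\in X^{(I,\alpha)}$ (so $x_I=\alpha$) and $(x,y)\in G^{-1}(z)$, we have $y_{j,\alpha_j}=g(\alpha_j,y_j)=z_j$ for every $j\in I$, i.e.\ $y\in Y^{(I,\alpha,z_I)}$. Letting $T:=\bigcup_{(I,\alpha)\in B_{\text{bad}}}Y^{(I,\alpha,z_I)}\subseteq Y$, this gives $\Pr_{\t'}[\x\in B]\le\Pr_{\t'}[\y\in T]$. Applying Lemma~\ref{lem:general} twice---once on the $\x$-marginal, once on the $\y$-marginal---yields
\[
\frac{|B|}{|X|}\le\Pr_{\t'}[\x\in B]+\frac{1}{n^2}\le\Pr_{\t'}[\y\in T]+\frac{1}{n^2}\le\frac{|T|}{|Y|}+\frac{2}{n^2}.
\]
Distinct bad labels index disjoint nonempty parts all contained in $B$, so $|B_{\text{bad}}|\le|B|$; a union bound plus the trivial $|X|\le m^n$ then gives $|T|/|Y|<|B|\cdot 2^{-(n\log m+2)}\le(|B|/|X|)/4$. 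Rearranging, $(3/4)\cdot|B|/|X|\le 2/n^2$, so $|B|/|X|\le 8/(3n^2)$, and (a) and (b) sum to $<4/n^2$.

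The main subtlety I anticipate is the two-marginal trick used for (b): rather than bounding $\Pr_{\t'}[\x\in B]$ via its own marginal alone---which would seemingly require a separate lower bound on $|G^{-1}(z)\cap X\times Y|$---I pass through the $\y$-marginal by observing that $\x\in B$ forces $\y$ into the \emph{fixed} set $T\subseteq Y$. Everything else (the data-processing step for (i), the trivial Bob case, the reduction of (a) to part (i), and the $|B_{\text{bad}}|\le|B|$ counting bound) should be mechanical once the correct setup is in place.
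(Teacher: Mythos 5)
Your proof is correct, and part (i) plus the Bob case of (ii) match the paper essentially verbatim. For the Alice case of part (ii), however, you take a genuinely different route. The paper couples $\y$ with $(\x',\y')$, uses Markov's inequality to control $\Pr_{bi\sim\b'\!\i'}\bigl[\Pr[\y\ne\y'\mid\x'\in X^{bi}]\ge 1/2\bigr]$, separately bounds $\Pr_{bi\sim\b'\!\i'}\bigl[\Pr[\x'\in X^{bi}]<\Pr[\x\in X^{bi}]/2\bigr]$, and then threads through the trivial lower bound $\Pr[\x\in X^{bi}]\ge 2^{-n\log m}$. You instead aggregate the bad parts into a single set $B\subseteq X$, observe that the same implication ($\x'$ having a bad label forces $\y'$ into the corresponding small $Y^{(I,\alpha,z_I)}$) puts $\y'$ into a fixed set $T\subseteq Y$ that is small by a union bound, and then apply Lemma~\ref{lem:general} to both marginals once to get a self-improving inequality $|B|/|X|\le (|B|/|X|)/4 + 2/n^2$. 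The underlying structural fact ($\x'\in X^{bi}\Rightarrow\y'\in Y^{bi}$) is the same, but your execution avoids the explicit coupling and Markov step: it is a purely combinatorial counting argument, and in my view somewhat cleaner. The constants even come out slightly better ($11/(3n^2)<4/n^2$). The only point worth double-checking is the bookkeeping that parts of the density-restoring partition are indexed by $(b,i)$ rather than by $(I,\alpha)$ alone; since distinct $(b,i)$ give disjoint nonempty parts, your inequality $|B_{\text{bad}}|\le|B|$ for the number of distinct bad labels still holds, so everything goes through.
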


Before proving the lemma, let us use it to show that $\t$ and $\t'$ are $o(1)$-close. For this, it suffices to exhibit a coupling such that $\Pr[\t=\t']\ge 1-o(1)$. Our coupling works as follows:

\medskip\noindent
{\slshape Begin at the root, and for each iteration of $\aPi$:}
\begin{enumerate}[topsep=1mm]
\item Sample this iteration's messages $\m$ and $\m'$ according to an optimal coupling.
\item If $\m\ne\m'$, or if $\m$ results in $<2^{-(n\log m+2)}$ fraction of $Y$ being retained (this includes the simulation's failure case), then proceed to sample the rest of $\t$ and $\t'$ independently.
\end{enumerate}
\medskip
It follows by induction on $k$ that after the $k$-th iteration, with probability at least $1-k\cdot 5/n^2$,
\begin{enumerate}
\item[(I)] $\t$ and $\t'$ match so far,
\item[(II)] $\Dmin(\Y)\le k\cdot(n\log m+2)\le n^3$ where $Y$ is Bob's set under $\t$ so far.
\end{enumerate}
This trivially holds for $k=0$. For $k>0$, conditioned on (I) and (II) for iteration $k-1$, the assumptions of \autoref{lem:iteration} are met and hence $\Pr[\m=\m']\ge 1-1/n^2$ and \[\Pr\bigl[\Dmin(\Y)\le (k-1)\cdot(n\log m+2)+(n\log m+2)=k\cdot(n\log m+2)\bigr]~\ge~1-4/n^2.\] By a union bound, with probability $\ge 1-5/n^2$, (I) and (II) continue to hold. Thus, \[\Pr[\text{(I) and (II) hold after the $k$-th iteration}]~\ge~(1-(k-1)\cdot 5/n^2)\cdot (1-5/n^2)~\ge~1-k\cdot 5/n^2.\] Since there are at most $n\log m$ iterations, we indeed always have $k\cdot(n\log m+2)\le n^3$ (in (II)), and in the end we have $\Pr[\t=\t']\ge 1-(n\log m)\cdot 5/n^2\ge 1-o(1)$ and thus $\t$ and $\t'$ are $o(1)$-close.

\begin{proof}[Proof of \autoref{lem:iteration}]
Let $\x\coloneqq\X$ be uniform over $X$, and $\y\coloneqq\Y$ be uniform over $Y$, and $(\x',\y')$ be uniform over $G^{-1}(z)\cap X\times Y$. By \autoref{lem:general}, $\x$ and $\x'$ are $1/n^2$-close, and $\y$ and $\y'$ are $1/n^2$-close.

First assume Bob sends a bit at $v$. Then $\m$ is some deterministic function of $\y$, and $\m'$ is the same deterministic function of $\y'$ (the bit sent on line 7); thus $\m$ and $\m'$ are $1/n^2$-close since $\y$ and $\y'$ are. Also, the second property in the lemma statement trivially holds.

Henceforth assume Alice sends a bit at $v$. Write $\m=\b\i\s$ (jointly distributed with $\x$) and $\m'=\b'\!\i'\!\s'$ (jointly distributed with $(\x',\y')$) as the concatenation of the three messages sent (on lines 11, 14, 16). Then $\b\i\s$ is some deterministic function of $\x$, and $\b'\!\i'\!\s'$ is the same deterministic function of $\x'$ ($\s$ and $\s'$ depend on $z$, which is fixed); thus $\m$ and $\m'$ are $1/n^2$-close since $\x$ and $\x'$ are. A subtlety here is that there may be outcomes of $\b\i$ for which $\s$ is not defined (there is no corresponding child in $\aPi$'s protocol tree, since Bob's set would become empty), in which case our randomized decision tree fails and outputs $\bot$. But such outcomes have $0$ probability under $\b'\!\i'$, so it is still safe to say $\m$ and $\m'$ are $1/n^2$-close, treating $\s$ as $\bot$ if it is undefined.

We turn to verifying the second property. Define $X^{bi}\times Y^{bi}\subseteq X\times Y$ as the rectangle at the end of the iteration if Alice sends $b$ and $i$, and note that $\x\in X^{\b\i}$ and $\x'\in X^{\b'\!\i'}$. There is a coupling of $\y$ and $\y'$ such that $\Pr[\y\ne\y']\le 1/n^2$; we may imagine that $\y$ is jointly distributed with $(\x',\y')$: sample $(\x',\y')$ and then conditioned on the outcome of $\y'$, sample $\y$ according to the coupling. Note that for each $bi$, \[\Pr[\y\in Y^{bi}]~\ge~\Pr[\y\in Y^{bi}\mid\x'\in X^{bi}]\cdot\Pr[\x'\in X^{bi}]~\ge~\Pr[\y=\y'\mid\x'\in X^{bi}]\cdot\Pr[\x'\in X^{bi}]\] (since $\x'\in X^{bi}$ implies $\y'\in Y^{bi}$), and so
\begin{equation} \label{eq:bob1}
\Pr_{bi\sim\b'\!\i'}\bigl[\Pr[\y\in Y^{bi}]<\Pr[\x'\in X^{bi}]/2\bigr]~\le~\Pr_{bi\sim\b'\!\i'}\bigl[\Pr[\y\ne\y'\mid\x'\in X^{bi}]\ge 1/2\bigr]~\le~2/n^2.
\end{equation}
It is also straightforward to check that
\begin{equation} \label{eq:bob2}
\Pr_{bi\sim\b'\!\i'}\bigl[\Pr[\x'\in X^{bi}]<\Pr[\x\in X^{bi}]/2\bigr]~\le~1/n^2.
\end{equation}
Since trivially $\Pr[\x\in X^{bi}]\ge 1/|X|\ge 2^{-n\log m}$, combining \eqref{eq:bob1} and \eqref{eq:bob2} we have
\begin{align*}
&\Pr_{bi\sim\b\i}\bigl[\Pr[\y\in Y^{bi}]<2^{-(n\log m+2)}\bigr]\\
\le~{}&\Pr_{bi\sim\b'\!\i'}\bigl[\Pr[\y\in Y^{bi}]<2^{-(n\log m+2)}\bigr]+1/n^2\\
\le~{}&\Pr_{bi\sim\b'\!\i'}\bigl[\Pr[\y\in Y^{bi}]<\Pr[\x\in X^{bi}]/4\bigr]+1/n^2\\
\le~{}&\Pr_{bi\sim\b'\!\i'}\Bigl[\Pr[\y\in Y^{bi}]<\Pr[\x'\in X^{bi}]/2\,\text{ or }\,\Pr[\x'\in X^{bi}]<\Pr[\x\in X^{bi}]/2\Bigr]+1/n^2\\
\le~{}&2/n^2+1/n^2+1/n^2.\qedhere
\end{align*}
\end{proof}

\paragraph{``One-sided error''.}
One more detail to iron out is the ``moreover'' part in the statement of \autoref{thm:goal}. The simulation we described does not quite satisfy this condition, but this is simple to fix: instead of halting with failure only when $Y$ becomes empty, we actually halt with failure when $\Dmin(\Y)>n^3$. This does not affect the correctness or efficiency analysis at all, but it ensures that we only output a transcript if $X\times Y$ is $\rho$-structured and $\Dmin(\Y)\le n^3$ at the end, which by \autoref{lem:general} guarantees that the transcript's rectangle intersects the slice $G^{-1}(z)$ and thus $\t\in\supp(\t')$.

\subsection{Efficiency: Number of queries}

We show that our randomized decision tree makes $O(|\Pi|/\log n)$ queries with high probability. If we insist on a decision tree that \emph{always} makes this many queries (to match the statement of \autoref{thm:goal}), we may terminate the execution early (with output $\bot$) whenever we exceed the threshold. This would incur only a small additional loss in the closeness of transcript distributions.

\begin{lemma}
The simulation makes $O(|\Pi|/\log n)$ queries with probability $\ge 1-\min(2^{-|\Pi|},1/n^{\Omega(1)})$.
\end{lemma}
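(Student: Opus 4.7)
The plan is to combine a deficiency-based potential argument with a martingale concentration inequality. First, observe that the decision tree queries $z$ only on the line marked $\blacktriangleright$ in \autoref{fig:decision-tree}: whenever the density-restoring partition fixes a set $I\subseteq\free\rho$ on Alice's side, the tree queries the corresponding $|I|$ bits of $z$. The total number of queries is therefore $Q=\sum_k |I_{i_k}|$, summed over all Alice-speak iterations of $\aPi$, of which there are at most $|\Pi|$.

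I would use the potential $\Phi\coloneqq\Dmin(\X_{\free\rho})$, which starts at $0$ and is always nonnegative. In a Bob-speak iteration neither $X$ nor $\free\rho$ changes, so $\Phi$ is unchanged. In an Alice-speak iteration, line 11 restricts $X\leftarrow X^b$ and changes the potential by exactly $\beta_k\coloneqq\log(|X|/|X^b|)$; \autoref{lem:density-restoring} applied to the subsequent density-restoring partition gives a further change of at most $-0.1|I_{i_k}|\log m+\delta_{i_k}$. Telescoping and using $\Phi_{\text{final}}\ge 0$ yields
\begin{equation*}
0.1\log m\cdot Q \;\le\; \sum_k\bigl(\beta_k+\delta_{i_k}\bigr).
\end{equation*}

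The remaining task is to bound the right-hand side in probability. Under the simulation, conditioned on the history before iteration $k$, $b$ is sampled with probability $|X^b|/|X|$ and $i$ with probability $|X^i|/|X^b|$. Since there are only two values of $b$, a short computation gives $\Pr[\beta_k\ge t\mid\text{past}]\le 2\cdot 2^{-t}$, and since the sizes $|\bigcup_{j\ge i}X^j|$ decrease in $i$, the same kind of argument gives $\Pr[\delta_{i_k}\ge t\mid\text{past}]\le 2^{-t}$. Hence each $\gamma_k\coloneqq\beta_k+\delta_{i_k}$ has conditional moment-generating function bounded by some constant $C$ at a suitable fixed $\lambda=\Theta(1)<\ln 2$. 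Applying the standard supermartingale-style Chernoff argument to $\prod_k e^{\lambda\gamma_k}/C$ yields $\Pr[\sum_k\gamma_k\ge S]\le C^{|\Pi|}e^{-\lambda S}$. Choosing $S=\Theta(|\Pi|+\log n)$ with a large enough constant makes this at most $\min(2^{-|\Pi|},n^{-\Omega(1)})$ (the two regimes $|\Pi|\ge\log n$ and $|\Pi|<\log n$ are handled by the two terms in the $\min$), and dividing by $0.1\log m=\Theta(\log n)$ gives $Q=O(|\Pi|/\log n)$ with the claimed probability.

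The main obstacle is the concentration step: the summands $\beta_k$ and $\delta_{i_k}$ are not independent, since each depends on the entire random-walk history, so plain Chernoff does not apply. What rescues the argument is that the tail estimates above are \emph{conditional} on the past, which is exactly the hypothesis needed for the multiplicative supermartingale. Once this is set up, the rest is routine bookkeeping with deficiencies.
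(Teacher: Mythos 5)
Your proof is correct and follows the paper's overall strategy (potential function $\Dmin(\X_{\free\rho})$, telescoping via \autoref{lem:density-restoring}, exponential-moment concentration), but it takes a genuinely different route through the concentration step. The paper deals with the dependence of the $\bgamma_j,\bdelta_j$ by a coupling trick: it samples fresh uniforms $\p_j,\q_j\in[0,1)$, defines \emph{mutually independent} dominating variables $\c_j\coloneqq\log(1/\p_j)+\log(1/(1-\p_j))$ and $\d_j\coloneqq\log(1/(1-\q_j))$ so that $\bgamma_j\le\c_j$ and $\bdelta_j\le\d_j$ pointwise, computes $\Ex\bigl[2^{\c_j/2}\bigr]=\pi$ explicitly, and applies plain Markov to $2^{\sum_j(\c_j+\d_j)/2}$. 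You instead prove the conditional tail bounds $\Pr[\beta_k\ge t\mid\text{past}]\le 2\cdot 2^{-t}$ and $\Pr[\delta_{i_k}\ge t\mid\text{past}]\le 2^{-t}$ directly (both are correct, and indeed the heart of the coupling trick is implicitly the same inverse-CDF observation), derive a uniform bound on the conditional MGF of $\gamma_k=\beta_k+\delta_{i_k}$, and run the standard supermartingale Chernoff argument on $\prod_k e^{\lambda\gamma_k}/C$. Both methods are valid; the paper's coupling avoids invoking martingale machinery and yields clean closed-form constants, while yours is the textbook Azuma-style treatment of adapted sums with bounded conditional MGF. One small point that deserves a sentence: after choosing $S=\Theta(|\Pi|+\log n)$ and dividing by $0.1\log m$, you get a query bound of roughly $(|\Pi|+\log n)/(25.6\log n)$, which is not literally $O(|\Pi|/\log n)$ when $|\Pi|\ll\log n$. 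It does give the right answer (namely $0$ queries) in that regime, but only because the constant hidden in $S$ can be taken well below $0.1\log m/\log n=25.6$; the paper handles the $|\Pi|\le o(\log n)$ case with a separate (but similar) calculation, and your proof should note that the constant in your $S$ is chosen small relative to that $25.6$, which the gadget size $m=n^{256}$ comfortably permits.
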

\begin{proof}
During the simulation, we view the quantity $\Dmin(\X_{\free\rho})\geq 0$ as a nonnegative potential function. Consider a single iteration where lines 11, 14, 16 modify the sets $X$ and $\free\rho$.
\begin{itemize}[itemsep=2mm]
\item In line~11, we shrink $X=X^0\cup X^1$ down to $X^{\b}$ where $\Pr[\b=b]=|X^b|/|X|$. Hence the increase in the potential function is $\gamma_{\b}\coloneqq \log(|X|/|X^{\b}|)$.
\item In line 14 (after $X\leftarrow X^b$), we shrink $X=\bigcup_i X^i$ down to $X^{\i}$ where $\Pr[\i=i]=|X^i|/|X|$. Moreover, in line 16, $|\!\free\rho|$ decreases by the number of bits we query. \autoref{lem:density-restoring} says that the potential changes by $\delta_{\i} -\Omega(\log n)\cdot\textbf{\#}(\text{queries in this iteration})$ where $\delta_{\i}\coloneqq\log(|X|/|\cup_{j\geq \i}X^j|)$.
\end{itemize}
We will see later that for any iteration, $\Ex[\gamma_{\b}],\Ex[\delta_{\i}]\le O(1)$.

For $j=1,\ldots,|\Pi|$, letting $\bgamma_j,\bdelta_j$ be the random variables $\gamma_{\b},\delta_{\i}$ respectively in the $j$-th iteration (and letting $\bgamma_j=\bdelta_j=0$ for outcomes in which Alice does not communicate in the $j$-th iteration), the potential function at the end of the simulation is $\sum_j(\bgamma_j+\bdelta_j)-\Omega(\log n)\cdot\textbf{\#}(\text{queries in total})\ge 0$ and hence \[\textstyle\Ex\bigl[\textbf{\#}(\text{queries in total})\bigr]~\le~O(1/\log n)\cdot\sum_j\bigl(\Ex[\bgamma_j]+\Ex[\bdelta_j]\bigr)~\le~O(|\Pi|/\log n).\] By Markov's inequality, this already suffices to show that with probability $\ge 0.9$ (say), the simulation uses $O(|\Pi|/\log n)$ queries. To get a better concentration bound, we would like for the $\bgamma_j,\bdelta_j$ variables (over all $j$) to be mutually independent, which they unfortunately generally are not. However, there is a trick to overcome this: we will define mutually independent random variables $\c_j,\d_j$ (for all $j$) and couple them with the $\bgamma_j,\bdelta_j$ variables in such a way that each $\bgamma_j\le\c_j$ and $\bdelta_j\le\d_j$ with probability $1$, and show that $\sum_j(\c_j+\d_j)$ is bounded with very high probability, which implies the same for $\sum_j(\bgamma_j+\bdelta_j)$. For each $j$, do the following.
\begin{itemize}
\item Sample a uniform real $\p_j\in[0,1)$ and define $\c_j\coloneqq\log(1/\p_j)+\log(1/(1-\p_j))$ and let $\bgamma_j\coloneqq\gamma_{\b}$ where $\b=0$ if $\p_j\in[0,|X^0|/|X|)$ and $\b=1$ if $\p_j\in[|X^0|/|X|,1)$ (where $X,X^0,X^1$ are the sets that arise in the first half of the $j$-th iteration, conditioned on the outcomes of previous iterations). Note that $\bgamma_j$ is correctly distributed, and that $\bgamma_j\le\c_j$ with probability $1$ (specifically, if $\b=0$ then $\bgamma_j=\log(|X|/|X^0|)\le\log(1/\p_j)\le\c_j$ and if $\b=1$ then $\bgamma_j=\log(|X|/|X^1|)\le\log(1/(1-\p_j))\le\c_j$). Also note that, as claimed earlier, $\Ex[\bgamma_j]\le\Ex[\c_j]=\int_0^1\bigl(\log(1/p)+\log(1/(1-p))\bigr)\,\mathrm{d}p=2/\ln 2\le O(1)$. For future use, note that $\Ex\bigl[2^{\c_j/2}\bigr]=\int_0^1(p(1-p))^{-1/2}\,\mathrm{d}p=\pi\le O(1)$.\vspace{8pt}
\item Sample a uniform real $\q_j\in[0,1)$ and define $\d_j\coloneqq\log(1/(1-\q_j))$ and let $\bdelta_j\coloneqq\delta_{\i}$ where $\i$ is such that $\q_j$ falls in the $\i$-th interval, assuming we have partitioned $[0,1)$ into half-open intervals with lengths $|X^i|/|X|$ in the natural left-to-right order (where $X,X^1,X^2,\ldots$ are the sets that arise in the second half of the $j$-th iteration, conditioned on the outcomes of the first half and previous iterations). Note that $\bdelta_j$ is correctly distributed, and that $\bdelta_j\le\d_j$ with probability $1$ (specifically, if $\i=i$ then $\bdelta_j=\log(|X|/|\cup_{j\ge i}X^j|)\le\log(1/(1-\q_j))=\d_j$). Also note that, as claimed earlier, $\Ex[\bdelta_j]\le\Ex[\d_j]\le\Ex[\c_j]\le O(1)$. For future use, note that $\Ex\bigl[2^{\d_j/2}\bigr]\le\Ex\bigl[2^{\c_j/2}\bigr]\le O(1)$.
\end{itemize}
Now for some sufficiently large constants $C,C'$ we have
{\allowdisplaybreaks
\begin{align*}
\Pr\bigl[\textbf{\#}(\text{queries in total})>C'\cdot|\Pi|/\log n\bigr]~&\textstyle\le~\Pr\bigl[\sum_j(\bgamma_j+\bdelta_j)>C\cdot|\Pi|\bigr]\\
&\textstyle\le~\Pr\bigl[\sum_j(\c_j+\d_j)>C\cdot|\Pi|\bigr]\\
&=~\Pr\Bigl[2^{\sum_j(\c_j+\d_j)/2}>2^{C\cdot|\Pi|/2}\Bigr]\\
&\le~\Ex\bigl[2^{\sum_j(\c_j+\d_j)/2}\bigr]/2^{C\cdot|\Pi|/2}\\
&\textstyle=~\Bigl(\prod_j\Ex\bigl[2^{\c_j/2}\bigr]\cdot\Ex\bigl[2^{\d_j/2}\bigr]\Bigr)/2^{C\cdot|\Pi|/2}\\
&\le~\bigl(O(1)/2^{C/2}\bigr)^{|\Pi|}\\
&\le~2^{-|\Pi|}.
\end{align*}}
If $|\Pi|\le o(\log n)$ then a similar calculation shows that $\Pr\bigl[\textbf{\#}(\text{queries in total})\ge 1\bigr]\le 1/n^{\Omega(1)}$.
\end{proof}

\section{Uniform Marginals Lemma} \label{sec:pseudorandomness}

\unifmarg*

We prove a slightly stronger statement formulated in \autoref{lem:pointwise-unif} below. For terminology, we say a distribution $\calD_1$ is \emph{$\varepsilon$-pointwise-close} to a distribution $\calD_2$ if for every outcome, the probability under $\calD_1$ is within a factor $1\pm\varepsilon$ of the probability under $\calD_2$. As a minor technicality (for the purpose of deriving \autoref{lem:general} from \autoref{lem:pointwise-unif}), we say that a random variable $\x\in[m]^J$ is \emph{$\delta$-essentially-dense} if for every nonempty $I\subseteq J$, $\Hmin(\x_I) \geq \delta\cdot |I|\log m-1$ (the difference from \autoref{def:dense} is the ``$-1$''); we also define \emph{$\rho$-essentially-structured} in the same way as $\rho$-structured but requiring $\X_{\free\rho}$ to be only $0.9$-essentially-dense instead of $0.9$-dense. The following strengthens a lemma from \cite{goos17query}, which implied that $G(\X,\Y)$ has full support over the set of all $z$ consistent with $\rho$.

\begin{lemma}[Pointwise uniformity] \label{lem:pointwise-unif}
Suppose $X\times Y$ is $\rho$-essentially-structured and $\Dmin(\Y)\leq n^3+1$. Then $G(\X,\Y)$ is $1/n^3$-pointwise-close to the uniform distribution over the set of all $z$ consistent with $\rho$.
\end{lemma}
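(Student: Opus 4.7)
The plan is to reduce to the case $\rho=*^n$ and then carry out a Fourier (character-sum) analysis of $\Pr[G(\X,\Y)=z]$.

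\textbf{Step 1: Reduction to $\rho=*^n$.} Since $\X_{\fix\rho}$ is fixed to some $\alpha$ and every $y \in Y$ satisfies $y^{(i)}_{\alpha_i}=\rho_i$ for $i\in\fix\rho$ (by $\rho$-structuredness), the coordinates $G(\X,\Y)_{\fix\rho}$ equal $\rho_{\fix\rho}$ deterministically and match any consistent $z$ automatically. Absorbing these constraints into the effective domain of $\Y$ reduces the task to an instance with $\rho=*^n$, $\X$ still $0.9$-essentially-dense on all blocks, and $\Dmin(\Y)$ within a lower-order term of $n^3+1$. So I may assume $\rho=*^n$ and $J=[n]$, and it remains to show $\Pr[G(\X,\Y)=z]\in(1\pm 1/n^3)\cdot 2^{-n}$ for every $z\in\{0,1\}^n$.

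\textbf{Step 2: Character expansion.} Using $\mathbf{1}[b=c]=(1+(-1)^{b+c})/2$, expand the indicator $\mathbf{1}[G(x,y)=z]$ and average over independent uniform $(\x,\y)\sim X\times Y$ to obtain
\[
\Pr[G(\X,\Y)=z]-2^{-n}~=~2^{-n}\!\!\sum_{\emptyset\ne S\subseteq[n]}\!(-1)^{z_S}\,\mathcal{E}_S,\quad \mathcal{E}_S~\coloneqq~\Ex_{(\x,\y)}\Bigl[\prod_{i\in S}(-1)^{\y^{(i)}_{\x_i}}\Bigr].
\]
Pointwise $1/n^3$-closeness reduces to $\sum_{\emptyset\ne S}|\mathcal{E}_S|\le 1/n^3$. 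Writing $\mathcal{E}_S=\Ex_\x\bigl[\Ex_{\y\sim Y}[\chi_{T_S(\x)}(\y)]\bigr]$ with $T_S(x)=\{(i,x_i):i\in S\}$, the aim is to bound each $\mathcal{E}_S$ by combining the blockwise density of $\X_S$ (which spreads $T_S(\x)$ over many bit-position tuples and enables averaging) with the low deficiency of $\Y$ (which via Parseval gives $\sum_T(\Ex_{\y\sim Y}[\chi_T(\y)])^2=2^{nm}/|Y|\le 2^{n^3+1}$).

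\textbf{Main obstacle.} The crux is the quantitative summation $\sum_{\emptyset\ne S}|\mathcal{E}_S|\le 1/n^3$. A direct Cauchy--Schwarz gives only $|\mathcal{E}_S|\le O(2^{n^3/2}\cdot m^{-0.45|S|})$, which is insufficient: the $2^{n^3/2}$ factor cannot be killed by the $m^{-0.45|S|}$ decay for $|S|\le n$, even with $m=n^{256}$. Overcoming this requires using more than the total min-entropy of $\Y$---namely its blockwise structure. A promising route is a chain-rule / hybrid argument that peels off the output bits $G(\X,\Y)_1,\ldots,G(\X,\Y)_n$ one at a time, so that the per-step effective deficiency of $\Y$ is $O(\log n)$ rather than $n^3$; telescoping the per-step errors yields $o(1/n^3)$ overall, with $m=n^{256}$ providing the polynomial slack needed at each step. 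Handling the ``one-sided error'' bookkeeping to derive the general-version statement (from the slightly stronger essentially-dense / essentially-structured formulation) is then a routine final check.
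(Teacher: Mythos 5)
Your Steps~1--2 match the paper's framework exactly: reduce to free coordinates, pass to the Fourier/character side, and try to bound every nonempty parity bias $|\mathcal{E}_S|$. Your diagnosis of the obstacle is also correct and on the mark: because $\Dmin(\Y)$ can be as large as $n^3$, a direct Cauchy--Schwarz bound of the form $2^{\Dmin(\Y)/2}\cdot m^{-0.45|S|}$ fails for small $|S|$. So far so good.

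The gap is in the proposed fix. Your ``chain-rule / hybrid'' sketch---peel off the output bits $G(\X,\Y)_1,\ldots,G(\X,\Y)_n$ one at a time and argue the per-step effective deficiency of $\Y$ is $O(\log n)$---does not stand up as written. Conditioning on the event $G(\X,\Y)_i=z_i$ is conditioning on an event that couples $\x_i$ and $\y$, so it does not preserve the product structure $\X\times\Y$ you need for the next step, and there is no reason the blockwise density of $\x$ survives such conditioning. More importantly, the chain rule only says that conditioning on a few bits cannot \emph{increase} deficiency by much; it gives no mechanism for why the deficiency relevant to the next output bit would magically drop from $n^3$ to $O(\log n)$. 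The deficiency of $\Y$ could in principle be concentrated in exactly the coordinates that matter. So the plan, as sketched, has no engine that actually reduces $\Dmin(\Y)$ below $n^3$, and the claimed telescoping does not follow.

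What the paper does instead is a qualitatively different decomposition that exploits the enormous length of each $\y_i$ (namely $m=n^{256}$ bits) rather than peeling off output bits. Each $\y_i$ is chopped into $m^{1/2}$ buckets of length $m^{1/2}$, and $\x_i$ is viewed as a bucket-pointer $\bell_i$ plus an in-bucket offset $\r_i$. A density-restoring-style process (\autoref{clm:step1}) decomposes $\Y_I$ into pieces $\y^c$ where at most $2n^3$ buckets per coordinate are fixed and every bucket union avoiding fixed buckets has deficiency $\le 1$; then $\X_I$ is decomposed by conditioning on $\bell$ and on $\r_{I^*}$ for the few bad coordinates (\autoref{clm:step2}, \autoref{clm:step3}). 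The resulting ``focused'' product distributions have $\Dmin(\y_T)\le 1$ on the relevant buckets, so the low-deficiency Cauchy--Schwarz calculation you set up in Step~2 applies verbatim with $m$ replaced by $m^{1/2}$. The key point is that $\Y$'s $n^3$ bits of deficiency are a vanishingly small fraction of its $nm=n^{257}$ bits, so after the bucket decomposition almost all pieces are essentially flat---this is the structural fact your hybrid argument would need but does not supply. You would need to discover the bucket decomposition (or some equivalent mechanism exploiting $m\gg n^3$) to close the gap.
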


\begin{proof}[Proof of \autoref{lem:general}]
Let $(\x,\y)$ be uniformly distributed over $G^{-1}(z) \cap X \times Y$. We show that $\x$ is $1/n^2$-close to $\X$; a completely analogous argument works to show that $\y$ is $1/n^2$-close to $\Y$. Let $E \subseteq X$ be any test event. Replacing $E$ by $X\smallsetminus E$ if necessary, we may assume $|E| \geq |X|/2$. Since $X\times Y$ is $\rho$-structured, $E\times Y$ is $\rho$-essentially-structured. Hence we can apply \autoref{lem:pointwise-unif} in both the rectangles $E\times Y$ and $X \times Y$:
\begin{align*}
\Pr[\x\in E]~&=~\frac{|G^{-1}(z) \cap E \times Y|}{|G^{-1}(z) \cap X\times Y|}~=~\frac{(1\pm 1/n^3)\cdot 2^{-\left|\free\rho\right|}\cdot|E \times Y|}{(1\pm 1/n^3)\cdot 2^{-\left|\free\rho\right|}\cdot|X\times Y|}\\
&=~(1\pm 3/n^3)\cdot|E|/|X|~=~|E|/|X|\pm 1/n^2.\qedhere
\end{align*}
\end{proof}

\subsection{Overview for \autoref{lem:pointwise-unif}}

A version of \autoref{lem:pointwise-unif} (for the inner-product gadget) was proved in \cite[\S2.2]{goos16rectangles} under the assumption that $\X$ and $\Y$ had low deficiencies: $\Dmin(\X_I),\Dmin(\Y_I)\leq O(|I|\log n)$ for free blocks~$I$. The key difference is that we only assume $\Dmin(\Y_I)\leq n^3+1$. We still follow the general plan from~\cite{goos16rectangles} but with a new step that allows us to reduce the deficiency of $\Y$.

\paragraph{Fourier perspective.}
The idea in \cite{goos16rectangles} to prove that $\z\coloneqq G(\X,\Y)$ is pointwise-close to uniform is to study $\z$ in the Fourier domain, and show that $\z$'s Fourier coefficients (corresponding to free blocks) decay exponentially fast. That is, for every nonempty $I\subseteq\free\rho$ we want to show that the bias of $\oplus(\z_I)$ (parity of the output bits $\z_I$) is exponentially small in $|I|$. Tools tailor-made for this situation exist: various ``$\Xor$ lemmas'' are known to hold for communication complexity (e.g.,~\cite{shaltiel03towards}) that apply as long as $\X_I$ and $\Y_I$ have low deficiencies. All this is recalled in \autoref{sec:fourier}. This suggests that all that remains is to reduce our case of high deficiency (of $\Y_I$) to the case of low deficiency.

\paragraph{Reducing deficiency via buckets.}
For the moment assume $I=[n]$ for simplicity of discussion. Our idea for reducing the deficiency of $\Y_I=\Y$ is as follows. We partition each $m$-bit string in $\Y\in(\{0,1\}^m)^n$ into $m^{1/2}$ many \emph{buckets} each of length $m^{1/2}$. We argue that $\Y$ can be expressed as a mixture of distributions $\y$, where $\y$ has few of its buckets fixed in each string $\y_i$, and for any way of choosing an unfixed bucket for each $\y_i$, the marginal distribution of $\y$ on the union $T$ of these buckets has deficiency as low as $\Dmin(\y_T)\le 1$. Correspondingly, we argue that $\X$ may be expressed as a mixture of distributions $\x$ that have a nice form:
\begin{center}
\begin{lpic}[b(3mm),t(1mm),r(7mm),l(-7mm)]{bucket(.27)}
\large
\lbl[c]{38,110;$x_1$}
\lbl[c]{38,70;$\x_2$}
\lbl[c]{38,30;$\x_3$}
\lbl[l]{410,108;$=\y_1$}
\lbl[l]{410,68;$=\y_2$}
\lbl[l]{410,28;$=\y_3$}

\small
\lbl[c]{9,50; $I'$}

\lbl[c]{155,70;$T_2$}
\lbl[c]{225,30;$T_3$}

\footnotesize\itshape
\lbl[c]{365,110;fixed}
\lbl[c]{365,70;fixed}
\lbl[c]{365,30;fixed}

\scriptsize\slshape
\lbl[c]{155,0;1st bucket}
\lbl[c]{225,0;2nd bucket}
\lbl[c]{295,0;3rd bucket}
\lbl[c]{365,0;4th bucket}
\end{lpic}
\end{center}
Here each pointer $\x_i$ ranges over a single bucket $T_i$. Moreover, for a large subset $I'\subseteq[n]$ of coordinates, $T_i$ is unfixed in $\y_i$ for $i\in I'$, and hence $\y$ has deficiency $\le 1$ on the union of these unfixed buckets. The remaining few $i\in[n]\smallsetminus I'$ are associated with fixed pointers $\x_i=x_i$ pointing into fixed buckets in $\y$. Consequently, we may interpret $(\x,\y)$ as a random input to $\Ind_{m^{1/2}}^n$ by identifying each bucket $T_i$ with $[m^{1/2}]$. In this restricted domain, we can show that $(\oplus\circ g^n)(\x,\y)$ is indeed very unbiased: the fixed coordinates do not contribute to the bias of the parity, and $(\x_{I'},\y_{I'})$ is a pair of low-deficiency variables for which an $\Xor$ lemma type calculation applies. The heart of the proof will be to find a decomposition of $\X\times\Y$ into such distributions $\x\times\y$.

In the remaining subsections, we carry out the formal proof of \autoref{lem:pointwise-unif}.

\subsection{Fourier perspective} \label{sec:fourier}

Henceforth we abbreviate $J\coloneqq\free\rho$. We employ the following calculation from \cite{goos16rectangles}, whose proof is reproduced in \autoref{sec:fourier-proof} for completeness. Here $\chi(z)\coloneqq(-1)^{\oplus(z)}$.

\begin{restatable}[Pointwise uniformity from parities]{lemma}{fourier} \label{lem:fourier}
If a random variable $\z_J$ over $\{0,1\}^J$ satisfies $\bigl|\Ex\bigl[\chi(\z_I)\bigr]\bigr|\le 2^{-5|I|\log n}$ for every nonempty $I\subseteq J$, then $\z_J$ is $1/n^3$-pointwise-close to uniform.
\end{restatable}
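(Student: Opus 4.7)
\textbf{Proof plan for Lemma~\ref{lem:fourier}.}

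The plan is a straightforward Fourier-analytic calculation: expand a point indicator in the character basis, identify the uniform distribution with the empty-set Fourier coefficient, and bound the contribution of every nonempty coefficient using the given hypothesis. Concretely, for any target $z\in\{0,1\}^J$, I would write
\[
\Pr[\z_J=z]~=~\Ex\Bigl[\prod_{i\in J}\tfrac{1+(-1)^{z_i+\z_i}}{2}\Bigr]~=~\frac{1}{2^{|J|}}\sum_{I\subseteq J}\chi(z_I)\cdot\Ex[\chi(\z_I)].
\]
The $I=\emptyset$ term contributes exactly the uniform probability $1/2^{|J|}$, so
\[
\Pr[\z_J=z]-\frac{1}{2^{|J|}}~=~\frac{1}{2^{|J|}}\sum_{\emptyset\ne I\subseteq J}\chi(z_I)\cdot\Ex[\chi(\z_I)].
\]

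To show $1/n^3$-pointwise closeness, I need the absolute value of the sum on the right (without the $1/2^{|J|}$ factor) to be at most $1/n^3$. Applying the triangle inequality together with the hypothesis $|\Ex[\chi(\z_I)]|\le 2^{-5|I|\log n}=n^{-5|I|}$ gives
\[
\Bigl|\sum_{\emptyset\ne I\subseteq J}\chi(z_I)\cdot\Ex[\chi(\z_I)]\Bigr|~\le~\sum_{k=1}^{|J|}\binom{|J|}{k}n^{-5k}~\le~\sum_{k=1}^{\infty}n^k\cdot n^{-5k}~=~\frac{n^{-4}}{1-n^{-4}}~\le~\frac{1}{n^3},
\]
where I used $|J|\le n$ so that $\binom{|J|}{k}\le n^k$, and assumed $n$ is sufficiently large (which is safe since the lemma is only meaningful asymptotically; the constant $5$ in the exponent provides ample slack).

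There is no real obstacle here: the whole argument is a one-shot application of Parseval/Fourier expansion over $\mathbb{F}_2^{|J|}$ together with a geometric-series estimate. The only thing to be a little careful about is that the claim is \emph{pointwise} closeness (multiplicative) rather than total-variation closeness, but this is automatic because the additive error $1/(n^3 2^{|J|})$ is exactly a $1/n^3$ fraction of the uniform mass $1/2^{|J|}$. The choice of the exponent $5$ in the hypothesis is engineered precisely so that, after absorbing the $\binom{|J|}{k}\le n^k$ combinatorial factor, each term decays as $n^{-4k}$ and the geometric sum is comfortably below $1/n^3$; any constant larger than $4$ would suffice, so there is slack to accommodate the $|J|\le n$ bound.
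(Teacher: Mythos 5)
Your proof is correct and follows essentially the same route as the paper's: Fourier-expand the point indicator, isolate the $I=\emptyset$ coefficient as the uniform mass, and bound the remaining coefficients via the triangle inequality and a geometric-series estimate using $\binom{|J|}{k}\le n^k$. The paper's version organizes the final estimate slightly differently (comparing $2^{-5|I|\log n}$ to $\varepsilon\cdot 2^{-2|I|\log|J|}$ before summing), but the argument and the slack exploited are the same.
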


To prove \autoref{lem:pointwise-unif}, it suffices to take $\z_J=g^J(\X_J,\Y_J)$ above and show for every $\emptyset\neq I\subseteq J$,
\begin{equation} \label{eq:bias}
\bigl|\Ex\bigl[\chi(g^I(\X_I,\Y_I))\bigr]\bigr|~\le~2^{-5|I|\log n}.
\end{equation}
In our high-deficiency case, we have
\begin{enumerate}
\item[(i)] $\Dmin(\X_I)\le 0.1|I|\log m+1$,
\item[(ii)] $\Dmin(\Y_I)\le n^3+1$.
\end{enumerate}

\paragraph{Low-deficiency case.}
As a warm-up, let us see how to obtain \eqref{eq:bias} by imagining that we are in the low-deficiency case, i.e., replacing assumption (ii) by
\begin{enumerate}
\item[(ii$'$)] $\Dmin(\Y_I)\le 1$.
\end{enumerate}
We present a calculation that is a very simple special case of, e.g., Shaltiel's~\cite{shaltiel03towards} $\Xor$ lemma for discrepancy (relative to uniform distribution).

Let $M$ be the communication matrix of $g\coloneqq\Ind_m$ but with $\{+1,-1\}$ instead of $\{0,1\}$ entries. The operator $2$-norm of $M$ is $\|M\|=2^{m/2}$ since the rows are orthogonal and each has $2$-norm $2^{m/2}$. The $|I|$-fold tensor product of $M$ then satisfies $\bigl\|M^{\otimes|I|}\bigr\|=2^{|I|m/2}$ by the standard fact that the $2$-norm behaves multiplicatively under tensor product. Here $M^{\otimes|I|}$ is the communication matrix of the 2-party function $\chi\circ g^I$. We think of the distribution of $\X_I$ as an $m^{|I|}$-dimensional vector $\calD_{\X_I}$, and of the distribution of $\Y_I$ as a $(2^m)^{|I|}$-dimensional vector $\calD_{\Y_I}$. Letting $\H_2$ ($\ge\Hmin$) denote R{\'e}nyi $2$-entropy, by (i) we have
\[
\bigl\|\calD_{\X_I}\bigr\|~=~2^{-\H_2(\X_I)/2}~\le~2^{-\Hmin(\X_I)/2}~\le~2^{-(|I|\log m-0.1|I|\log m-1)/2}~=~2^{-0.45|I|\log m+1/2}.
\]
Similarly, by (ii$'$) we would have
\[
\bigl\|\calD_{\Y_I}\bigr\|~\le~2^{-(|I|m-1)/2}~=~2^{-|I|m/2+1/2}.
\]
The left side of \eqref{eq:bias} is now
\begin{align}
\Bigl|\calD_{\X_I}^\top\,M^{\otimes|I|}\,\calD_{\Y_I}\Bigr|~\le~\bigl\|\calD_{\X_I}\bigr\|\cdot\bigl\|M^{\otimes|I|}\bigr\|\cdot\bigl\|\calD_{\Y_I}\bigr\|~&\le~2^{-0.45|I|\log m+1/2}\cdot 2^{|I|m/2}\cdot 2^{-|I|m/2+1/2}\notag\\
&=~2^{-0.45|I|\log m+1}~\le~2^{-5|I|\log n}.\label{eq:norm}
\end{align}
Therefore our goal becomes to reduce (via buckets) from case (ii) to case (ii$'$).

\subsection{Buckets}

We introduce some bucket terminology for random $(\x,\y)\in[m]^I\times(\{0,1\}^m)^I$.
\begin{itemize}[itemsep=2mm]
\item Each string $\y_i$ is partitioned into $m^{1/2}$ buckets each of length $m^{1/2}$.
\item We think of $\x_i$ as a pair $\bell_i\r_i$ where $\bell_i$ specifies which bucket and $\r_i$ specifies which element of the bucket. (Or, viewing $\x_i\in\{0,1\}^{\log m}$, $\bell_i\in\{0,1\}^{(\log m)/2}$ would be the left half and $\r_i\in\{0,1\}^{(\log m)/2}$ would be the right half.) Thus $\x=\bell\r$ where the random variable $\bell\in[m^{1/2}]^I$ picks a bucket for each coordinate, and the random variable $\r\in[m^{1/2}]^I$ picks an element from each of the buckets specified by $\bell$. Every outcome $\ell$ of $\bell$ has an associated \emph{bucket union} (one bucket for each string) given by $T_\ell\coloneqq\bigcup_{i\in I}(\{i\}\times T_{\ell_i})$ where $T_{\ell_i}\subseteq[m]$ is the bucket specified by $\ell_i$. Here a bit index $(i,j)\in I\times[m]$ refers to the $j$-th bit of the string $\y_i$.
\end{itemize}

\subsection{Focused decompositions}
Our goal is to express the product distribution $\X_I\times\Y_I$ as a convex combination of product distributions $\x\times\y$ that are \emph{focused}, which informally means that many pointers in $\x$ point into buckets that collectively have low deficiency in $\y$, and the remaining pointers produce constant gadget outputs. A formal definition follows.
\begin{definition} \label{def:focused}
A product distribution $\x\times\y$ over $[m]^I\times(\{0,1\}^m)^I$ is called \emph{focused} if there is a partial assignment $\sigma\in\{0,1,*\}^I$ such that, letting $I'\coloneqq\free\sigma$, we have: $|I'|\ge|I|/2$, and $g^I(\x,\y)$ is always consistent with $\sigma$, and for each $i\in I'$, $\x_i=\ell_i\r_i$ is always in a specific bucket $T_{\ell_i}\subseteq[m]$, and
\begin{itemize}
\item[(i$^*$)] $\Dmin(\x_{I'})\le 0.6|I'|\log m^{1/2}$~~with respect to $\bigtimes_{i\in I'}T_{\ell_i}$,
\item[(ii$^*$)] $\Dmin(\y_T)\le 1$~~where $T\coloneqq\bigcup_{i\in I'}(\{i\}\times T_{\ell_i})$.
\end{itemize}
\end{definition}

We elaborate on this definition. Since $g^I(\x,\y)$ is always consistent with $\sigma$, the coordinates $\fix\sigma=I\smallsetminus I'$ are irrelevant to the bias of the parity of $g^I(\x,\y)$. For each $i\in I'$, we might as well think of the domain of $\x_i$ as $T_{\ell_i}$ instead of $[m]$, and of the domain of $\y_i$ as $\{0,1\}^{T_{\ell_i}}$ instead of $\{0,1\}^m$. Hence, out of the $|I'|m$ bits of $\y_{I'}$, the only relevant ones are the $|I'|m^{1/2}$ bits indexed by $T$. We may thus interpret $(\x_{I'},\y_T)$ as a random input to $\Ind_{m^{1/2}}^{I'}$. In summary,
\begin{equation} \label{eq:interpret}
\bigl|\Ex\bigl[\chi(g^I(\x,\y))\bigr]\bigr|~=~\bigl|\Ex\bigl[\chi(g^{I'}(\x_{I'},\y_{I'}))\bigr]\bigr|~=~\bigl|\Ex\bigl[\chi(\Ind_{m^{1/2}}^{I'}(\x_{I'},\y_T))\bigr]\bigr|.
\end{equation}

If $\x\times\y$ is focused, then the calculation leading to \eqref{eq:norm} can be applied to $\x_{I'}\times\y_T$ with $m$ replaced by $m^{1/2}$, $|I|$ replaced by $|I'|\ge|I|/2$, and min-entropy rate $0.9$ replaced by $0.4$, to show that
\begin{equation*}
\text{value of \eqref{eq:interpret}}~\le~2^{-0.2|I'|\log m^{1/2}+1}~\le~2^{-(0.2/4)|I|\log m+1}~\le~2^{-5|I|\log n-1}. \tag{using $m=n^{256}$}
\end{equation*}

\begin{lemma} \label{lem:reduce}
The product distribution $\X_I\times\Y_I$ can be decomposed into a mixture of product distributions $\Ex_{d\sim\d}[\x^d\times\y^d]$ over $[m]^I\times(\{0,1\}^m)^I$ ($d$ stands for ``data'') such that $\x^d\times\y^d$ is focused with probability at least $1-2^{-5|I|\log n-1}$ over $d\sim\d$.
\end{lemma}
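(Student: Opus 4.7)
The plan is a three-stage decomposition with data $d = (e, \ell, r)$: a bucket-level density-restoring partition of $\Y_I$, a pointer-level decomposition of $\X_I$ into bucket labels and within-bucket positions, and a final conditioning step that produces $\sigma$ via the coincidences of the two.

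\emph{Stage~1 (decomposing $\Y_I$).} View $\Y_I$ as a random variable over $(\{0,1\}^{m^{1/2}})^{I \times [m^{1/2}]}$, where each block is one bucket. Apply the density-restoring partition of Lemma~\ref{lem:density-restoring} (which readily generalizes from alphabet $[m]$ to any alphabet---the proof is alphabet-agnostic) with threshold $\delta \coloneqq 1 - 1/(|I|\, m^{1/2})$, producing $\Y_I = \sum_e p_e \y^e$ with fixed-bucket sets $F_e \subseteq I \times [m^{1/2}]$ such that $\y^e_{\bar F_e}$ is $\delta$-dense. For any $T \subseteq \bar F_e$ with $|T| \le |I|$ we then get $\Dmin(\y^e_T) \le (1-\delta)|T|\,m^{1/2} \le 1$, which is exactly condition~(ii$^*$). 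The Deficiency bound of Lemma~\ref{lem:density-restoring} gives $|F_e| \le |I|(\Dmin(\Y_I) + \delta_e)$ where $\delta_e \coloneqq \log(1/\Pr_\e[\e\ge e])$; using the standard tail $\Pr[\delta_\e > C] \le 2^{-C}$ with $C = 5|I|\log n + O(1)$, we get $|F_e| = O(|I|\,n^3)$ with probability $\ge 1 - 2^{-5|I|\log n - 3}$ over~$e$.

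\emph{Stage~2 (decomposing $\X_I$).} Split $\x = \bell\r$ with $\bell,\r \in [m^{1/2}]^I$. For a good $e$, let $S_\ell \coloneqq \{i : (i, \ell_i) \in F_e\}$. The claim is $\Pr_{\bell}[|S_\bell| > |I|/2] \le 2^{-5|I|\log n - 3}$. Using the worst-case marginal $\Pr[\bell = \ell] \le 2^{D_0} m^{-|I|/2}$ (with $D_0 \coloneqq 0.1|I|\log m + 1$) and independence of the $\ell_i$'s under uniform $\bell$, the moment satisfies
\[
\Ex_{\bell}\bigl[k^{|S_\bell|}\bigr] ~\le~ 2^{D_0}\prod_i\bigl(1 + (k-1)f_i/m^{1/2}\bigr) ~\le~ 2^{D_0}\exp\bigl((k-1)|F_e|/m^{1/2}\bigr),
\]
where $f_i \coloneqq |F_e \cap \{i\}\times[m^{1/2}]|$. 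Choosing $k \coloneqq n^{100}$: the exponent is $O(n^{100-125}|I|) = o(1)$, so the moment is $O(2^{D_0}) = O(n^{25.6|I|})$; dividing by $k^{|I|/2} = n^{50|I|}$ and applying Markov gives the bound.

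\emph{Stage~3 (focusing).} For each $(\ell, r)$ with $|S_\ell| \le |I|/2$, set $\y^d \coloneqq \y^e$ and $\x^d \coloneqq (\X_I \mid \bell = \ell,\, \r_{S_\ell} = r)$, and define $\sigma$ by $\sigma_i = *$ for $i \in I' \coloneqq I \setminus S_\ell$ and $\sigma_i = (\y^e_i)_{(\ell_i, r_i)}$ for $i \in S_\ell$ (a fixed bit since $(i,\ell_i) \in F_e$). Consistency of $g^I(\x^d,\y^d)$ with $\sigma$, the inclusion $\x^d_i \in T_{\ell_i}$ for $i \in I'$, and condition~(ii$^*$) are then automatic. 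Condition~(i$^*$) asks that $\Pr[\X_I = (\ell,r,r')] \le m^{-0.2|I'|}\Pr[\bell=\ell,\,\r_{S_\ell}=r]$ for every $r' \in [m^{1/2}]^{I'}$. For every bad $(\ell, r)$, there is a witness $r'$ with $\Pr[\X_I = (\ell,r,r')] > m^{-0.2|I'|}\Pr[\bell=\ell,\,\r_{S_\ell}=r]$; combined with $\Pr[\X_I = (\ell,r,r')] \le 2^{D_0} m^{-|I|}$, this forces $\Pr[\bell=\ell,\,\r_{S_\ell}=r] \le 2^{D_0} m^{-|I| + 0.2|I'|}$. Summing, and using the moment bound $\Ex_{\text{unif }\bell}[m^{0.3|S_\bell|}] \le \exp(m^{-0.2}|F_e|) = O(1)$ which yields $\sum_\ell m^{0.3|S_\ell|} \le O(m^{|I|/2})$, the total bad measure is
\[
\sum_{(\ell,r)\text{ bad}} \Pr[\bell=\ell,\r_{S_\ell}=r] ~\le~ 2^{D_0} m^{-0.8|I|}\sum_\ell m^{0.3|S_\ell|} ~\le~ 2^{D_0+1} m^{-0.3|I|} ~=~ O(n^{-51|I|}).
\]
A union bound over the three stages yields focused with probability $\ge 1 - 2^{-5|I|\log n - 1}$. (One checks that $\x^d \times \y^d$ is indeed a product distribution and that $\X_I \times \Y_I = \Ex_d[\x^d \times \y^d]$, using the disjointness of the $\y^e$'s supports.)

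\emph{Main obstacle.} The delicate parameter is the density threshold $\delta$ in Stage~1: it must be aggressive enough to force the marginal deficiency $\Dmin(\y^e_T) \le 1$ directly, yet loose enough that $|F_e|$ remains negligible relative to the total number $|I|\,m^{1/2}$ of buckets, which is what drives $|S_\ell| \le |I|/2$ with overwhelming probability. The choice $1 - \delta = 1/(|I|\, m^{1/2})$ is forced by exploiting that (ii$^*$) is needed only for $|T| \le |I|$; relaxing this would force $\delta$ much closer to $1$ and make $|F_e|$ uncontrollable. A secondary subtlety is amplifying ``typical'' to $2^{-5|I|\log n - O(1)}$ probability bounds, which requires moment-generating-function arguments (Stages~2 and~3) and decisively uses the polynomial gadget size $m = n^{256}$ via $m^{-\Omega(1)} \ll n^{-|I|}$.
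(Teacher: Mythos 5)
Your three-stage plan (bucket-level decomposition of $\Y_I$, pointer-level decomposition of $\bell$, then conditioning $\r$) matches the paper's structure (Claims 25--27), and the final assembly of $d=(e,\ell,r)$ with $\x^d=(\X_I\mid\bell=\ell,\,\r_{S_\ell}=r)$ is exactly the paper's construction. Where you genuinely diverge is in the machinery used at each stage. In Stage~1, the paper's process iteratively conditions on whole \emph{bucket unions} (a maximal pairwise-disjoint collection whose joint conditional deficiency exceeds the iteration count), which yields a \emph{per-string} bound of $2n^3$ fixed buckets together with property (P2). You instead run a per-bucket density-restoring partition with an aggressive threshold $\delta=1-1/(|I|m^{1/2})$, obtaining (ii$^*$) directly from $\delta$-density and a bound $|F_e|=O(|I|n^3)$ on the \emph{total} count of fixed buckets; this is a weaker structural guarantee but is what your later stages actually need. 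In Stage~2, the paper combines the per-string fixed-bucket count (P1) with the blockwise-density of $\bell$ via a union bound over $\binom{|I|}{|I|/2}\cdot(2n^3)^{|I|/2}$ choices; you replace this by a moment-generating-function argument $\Ex_\bell[k^{|S_\bell|}]\le 2^{D_0}\exp((k-1)|F_e|/m^{1/2})$ with $k=n^{100}$ plus Markov, which is what lets you get away with the coarser total $|F_e|$. In Stage~3 the paper applies the min-entropy chain rule (via (Q2) and then again in Claim~27), while you bound the bad measure by a direct sum using the point-probability bound $\Pr[\X_I=\cdot]\le 2^{D_0}m^{-|I|}$ and a second MGF computation $\Ex_{\text{unif}}[m^{0.3|S_\bell|}]=O(1)$. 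Both routes clear the $2^{-5|I|\log n-O(1)}$ target with room to spare given $m=n^{256}$. Two small imprecisions worth flagging: Lemma~11 is stated for \emph{sets} (uniform distributions), so applying it to the marginal $\Y_I$, which is generally non-uniform, requires the distributional generalization you do not spell out (the paper's Claim~25 proof handles this by conditioning on events $E$, and the argument is indeed robust to this); and you should note explicitly that for $i\in I'$ one has $(i,\ell_i)\notin F_e$ so that $T\subseteq\overline{F_e}$ and $|T|=|I'|\le|I|$, which is what makes (ii$^*$) apply. Neither issue is a gap; the proof is sound.
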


Using \autoref{lem:reduce}, which we prove in the following subsection, we can derive \eqref{eq:bias}:
\begin{align*}
\bigl|\Ex\bigl[\chi(g^I(\X_I,\Y_I))\bigr]\bigr|~&\le~\Ex_{d\sim\d}\bigl|\Ex\bigl[\chi(g^I(\x^d,\y^d))\bigr]\bigr|\\
&\le~\Pr[\d\text{ is not focused}]+\max_{\text{focused }d}\bigl|\Ex\bigl[\chi(g^I(\x^d,\y^d))\bigr]\bigr|\\
&\le~2^{-5|I|\log n-1}+2^{-5|I|\log n-1}~=~2^{-5|I|\log n}.
\end{align*}

\subsection{Finding a focused decomposition}

We now prove \autoref{lem:reduce}. By assumption, $\X_I=\bell\r$ is $0.9$-essentially-dense (since $\X_J$ is) and $\Dmin(\Y_I)\le\Dmin(\Y)\le n^3+1$. We carry out the decomposition in the following three steps. Define $\varepsilon\coloneqq 2^{-5|I|\log n-1}$.

\begin{claim} \label{clm:step1}
$\Y_I$ can be decomposed into a mixture of distributions $\Ex_{c\sim\c}[\y^c]$ over $(\{0,1\}^m)^I$ such that with probability at least $1-\varepsilon/3$ over $c\sim\c$,
\begin{enumerate}[label=(P\arabic*),leftmargin=13mm]
\item \label{p1} each string in $\y^c$ has at most $2n^3$ fixed buckets,
\item \label{p2} each bucket union $T_\ell$ not containing fixed buckets has $\Dmin(\y^c_{T_\ell})\le 1$.
\end{enumerate}
\end{claim}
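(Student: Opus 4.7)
My plan is to prove the claim by adapting the density-restoring partition of \autoref{sec:density-restoring} to operate at the bucket level on $\Y_I$. I would index the domain $(\{0,1\}^m)^I$ by bucket-coordinates $(i,\ell)\in I\times[m^{1/2}]$, each bucket carrying $m^{1/2}$ bits, and iteratively pick a maximal bucket-coordinate subset $S$ and outcome $v$ with $\Pr[\y_S=v]>2^{-(1-\eta)|S|m^{1/2}}$, peeling off $\{y:y_S=v\}$ as a part labeled $(S^c,v^c)$ and recursing on the complement. The parameter $\eta\coloneqq 1/(|I|m^{1/2})$ is the largest value consistent with P2.

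Property P2 is then immediate from the same maximality-of-$S$ argument used in the proof of \autoref{lem:density-restoring}: for any bucket union $T_\ell$ disjoint from $S^c$, $\Hmin(\y^c_{T_\ell})\ge(1-\eta)|I|m^{1/2}=|I|m^{1/2}-1$, hence $\Dmin(\y^c_{T_\ell})\le 1$.

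For property P1 I would bound $\Pr_{c\sim\c}[\exists i: |S^c_i|>2n^3]$ by a union bound over $i\in I$, reducing to controlling $\Pr_c[|S^c_i|>2n^3]$ for each fixed $i$. The key leverage is the marginal bound $\Dmin(\Y_i)\le\Dmin(\Y)\le n^3+1$: any particular string-$i$ projection $(S_i^*,v_i^*)$ of a part-label with $|S_i^*|>2n^3$ satisfies $\Pr[\y_{i,S_i^*}=v_i^*]\le 2^{n^3+1-2n^3 m^{1/2}}$, astronomically small since $m^{1/2}=n^{128}$. Moreover, distinct parts sharing a common string-$i$ projection are disjoint subsets of $\{y:y_{i,S_i^*}=v_i^*\}$, so the total probability mass of all parts with a given projection is at most $\Pr[\y_{i,S_i^*}=v_i^*]$.

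The main obstacle is controlling the sum over the a priori exponentially many distinct projections that can arise in the partition. I expect to address this either by tightening the per-label accounting via the density-restoring lower bound $\Pr[\y\in Y^c]>2^{-(1-\eta)|S^c|m^{1/2}}$ (which caps how many labels can coexist), or by augmenting the procedure to abort whenever a per-string counter would exceed $2n^3$ and bounding the abortion probability via a martingale-style argument that tracks conditional probabilities along the fixing sequence. The enormous gap between the per-label estimate (exponentially small in $n^{131}$) and the target slack $\varepsilon/3\ge 2^{-O(n\log n)}$ suggests that substantial looseness in the combinatorial counting can be tolerated.
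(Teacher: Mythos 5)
Your maximality argument for (P2) is essentially the paper's, and with the same numbers: applying your rate threshold $\Dmin(\y_S)>\eta|S|m^{1/2}$ with $\eta=1/(|I|m^{1/2})$ to a set $S$ that is a union of $k$ disjoint bucket unions (so $|S|=k|I|$) gives exactly the paper's additive threshold $\Dmin>k$. The gap is in (P1), and it is created by your decision to let $S$ range over \emph{arbitrary} sets of bucket-coordinates. The deficiency chain $\Dmin(\y_S\mid E)\le\Dmin(\Y_I)+\log(1/\Pr[E])\le 2n^3$ combined with your threshold bounds only $|S|/|I|<2n^3$, i.e.\ the \emph{average} per-string count of fixed buckets; nothing prevents the $\approx 2n^3|I|$ fixed buckets from clustering on one string. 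Your probabilistic patch does not close this: for $|S^*_i|\approx 2n^3$ the number of distinct projections $(S^*_i,v^*_i)$ is about $\binom{m^{1/2}}{2n^3}\cdot 2^{2n^3 m^{1/2}}$, which is comparable to the reciprocal of the per-projection probability $2^{n^3+1-2n^3m^{1/2}}$, so the union bound over projections is vacuous; and the martingale/abort alternative is not worked out.

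The paper sidesteps all of this by a different choice of what to fix. It runs a peeling process on an event $E$ (initially everything), and while $\Pr[\Y_I\in E]>\varepsilon/3$ it picks a \emph{maximal set $\calT=\{T_{\ell^1},\dots,T_{\ell^k}\}$ of pairwise-disjoint bucket unions} (one bucket per string each) with $\Dmin(\Y_{\cup\calT}\mid E)>k$, together with a witnessing outcome $\beta$, outputs $(\Y_I\mid \Y_{\cup\calT}=\beta,\,E)$, and removes that slice from $E$. Because each $T_{\ell^j}$ contributes exactly one bucket per string and the $T_{\ell^j}$ are pairwise disjoint, every string ends up with exactly $k$ fixed buckets; and $k<\Dmin(\Y_{\cup\calT}\mid E)\le\Dmin(\Y_I)+\log(1/\Pr[\Y_I\in E])\le n^3+1+\log(3/\varepsilon)\le 2n^3$ is a \emph{deterministic} bound holding for every part emitted inside the loop. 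Thus (P1) and (P2) hold for all those parts; the $\varepsilon/3$ failure probability is simply the mass of the single residual part $(\Y_I\mid E)$ emitted after the loop halts. No union bound over coordinates or projections is needed. The fix you are missing, in short, is to make the unit of fixation a whole bucket union rather than an individual bucket, so that ``number of units fixed'' equals ``buckets fixed per string'' and is controlled directly by the deficiency of $\Y_I$.
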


\begin{claim} \label{clm:step2}
For any $c$ satisfying \ref{p1}, with probability at least $1-\varepsilon/3$ over $\ell\sim\bell$,
\begin{enumerate}[label=(Q\arabic*),leftmargin=13mm]
\item \label{q1} the bucket union $T_\ell$ contains at most $|I|/2$ fixed buckets of $\y^c$,
\item \label{q2} $\Dmin(\r\mid\bell=\ell)\le 0.25|I|\log m^{1/2}$.
\end{enumerate}
\end{claim}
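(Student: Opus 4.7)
The plan is to deduce both items from essential-density of the bucket-index marginal $\bell$, itself an easy consequence of essential-density of $\X_I$, and then bound the two failure events separately and combine by a union bound. Since each $\bell_i$ is a projection whose preimage has size $m^{1/2}$ at every point, essential-density of $\X_I$ gives
\[
\Pr[\bell_K=\ell^*]~\le~m^{|K|/2}\cdot\max_{x}\Pr[\X_K=x]~\le~m^{|K|/2}\cdot 2m^{-0.9|K|}~=~2m^{-0.4|K|}
\]
for every nonempty $K\subseteq I$ and every $\ell^*$, so in particular $\Pr[\bell=\ell]\le 2m^{-0.4|I|}$ pointwise. I will show each of (Q1), (Q2) fails with probability at most $\varepsilon/6$, which suffices.

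For (Q1), let $F_i\subseteq[m^{1/2}]$ be the set of fixed buckets of $\y^c_i$, so $|F_i|\le 2n^3$ by (P1). The number of $\ell\in[m^{1/2}]^I$ hitting $F_i$ on a prescribed coordinate subset $S\subseteq I$ is at most $(2n^3)^{|S|}(m^{1/2})^{|I|-|S|}$. Union-bounding over $S$ with $|S|\ge|I|/2$ and applying the pointwise bound on $\Pr[\bell=\ell]$, the failure probability for (Q1) is at most
\[
\sum_{|S|\ge|I|/2}\binom{|I|}{|S|}(2n^3)^{|S|}m^{(|I|-|S|)/2}\cdot 2m^{-0.4|I|}.
\]
Since $2n^3/m^{1/2}\ll 1$, this geometric sum is controlled by its $|S|=\lceil|I|/2\rceil$ term, and plugging in $m=n^{256}$ yields roughly $n^{-\Omega(|I|)}$, comfortably below $\varepsilon/6$.

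For (Q2), using $\Pr[\r=r\mid\bell=\ell]\le\max_x\Pr[\X_I=x]/\Pr[\bell=\ell]\le 2m^{-0.9|I|}/\Pr[\bell=\ell]$ gives
\[
\Dmin(\r\mid\bell=\ell)~\le~|I|\log m^{1/2}-0.9|I|\log m+1-\log\Pr[\bell=\ell]~=~-0.4|I|\log m+1-\log\Pr[\bell=\ell].
\]
The right side is at most $0.25|I|\log m^{1/2}$ precisely when $\Pr[\bell=\ell]\ge 2m^{-0.525|I|}$. The set of $\ell$ violating this inequality has cardinality at most $m^{|I|/2}$, so its total probability is at most $m^{|I|/2}\cdot 2m^{-0.525|I|}=2m^{-0.025|I|}$, which is again well below $\varepsilon/6$ at $m=n^{256}$.

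The main obstacle is essentially arithmetic bookkeeping: one must check that the polynomial blow-up $m=n^{256}$ is large enough for $2m^{-0.4|K|}$ to absorb both the $(2n^3)^{|I|/2}$ penalty from (P1) appearing in the (Q1) sum and the $m^{|I|/2}$ domain-count in the (Q2) bound, with enough slack to beat $\varepsilon/6=2^{-5|I|\log n-1}/3$. Both constraints are quite comfortable at this gadget size, and there is no conceptual subtlety beyond setting up the marginal density bound on $\bell$.
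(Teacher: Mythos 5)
Your proof is correct and follows the same overall plan as the paper's: derive essential density of $\bell$ from that of $\X_I$, bound each of (Q1), (Q2) by $\varepsilon/6$, and union bound. Two tactical differences are worth noting. For (Q1), the paper applies the $0.8$-essential-density of the marginal $\bell_H$ to bound $\Pr[T_\bell\text{ hits fixed buckets on all of }H]\le(2n^3)^{|H|}\cdot 2m^{-0.4|H|}$ directly, whereas you count bad $\ell$'s and multiply by the pointwise bound $\Pr[\bell=\ell]\le 2m^{-0.4|I|}$; your count pays an extra $m^{(|I|-|S|)/2}$ for the free coordinates, so your dominant term carries $m^{-0.15|I|}$ where the paper's carries $m^{-0.2|I|}$ --- lossier but still ample at $m=n^{256}$. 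For (Q2), the paper cites the min-entropy chain rule (Vadhan, Lemma 6.30) as a black box; your argument, counting the $\ell$ with $\Pr[\bell=\ell]$ below a threshold against the trivial bound $m^{|I|/2}$ on the domain of $\bell$, is precisely the standard proof of that chain rule, so the two are equivalent. One nit: you leave the closing arithmetic at ``roughly $n^{-\Omega(|I|)}$'' and ``well below $\varepsilon/6$''; the paper carries the numbers through, and yours do check out, but for a complete write-up you should plug in $m=n^{256}$ and $\varepsilon=2^{-5|I|\log n-1}$ explicitly as the paper does.
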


\begin{claim} \label{clm:step3}
For any $c$ and $\ell$ satisfying \ref{q1}, \ref{q2}, letting \[I^*~\coloneqq~\bigl\{i\in I\,:\,\text{the $\ell_i$ bucket of $\y^c_i$ is fixed}\bigr\}\qquad\text{and}\qquad I'~\coloneqq~I\smallsetminus I^*,\] with probability at least $1-\varepsilon/3$ over $r_{I^*}\sim(\r_{I^*}\mid\bell=\ell)$, we have $\Dmin(\r_{I'}\mid\bell=\ell,\,\r_{I^*}=r_{I^*})\le 0.6|I'|\log m^{1/2}$.
\end{claim}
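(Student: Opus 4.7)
My plan is to prove Claim~\ref{clm:step3} by combining the standard min-entropy chain rule with the deficiency hypothesis (Q2), using (Q1) to supply just enough room in the constants thanks to the polynomial gadget size $m=n^{256}$.

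Writing $p(r_{I^*})\coloneqq\Pr[\r_{I^*}=r_{I^*}\mid\bell=\ell]$, I start with the observation that (Q2) gives $\Hmin(\r\mid\bell=\ell)\ge 0.75\,|I|\log m^{1/2}$. The chain rule
\[
\Hmin(\r_{I'}\mid\bell=\ell,\,\r_{I^*}=r_{I^*}) ~\ge~ \Hmin(\r\mid\bell=\ell)-\log(1/p(r_{I^*})),
\]
immediate from $\Pr[\r=(r_{I'},r_{I^*})\mid\bell=\ell]\le 2^{-\Hmin(\r\mid\bell=\ell)}$, then reduces the target deficiency bound $0.6\,|I'|\log m^{1/2}$ to the elementary inequality
\[
\log(1/p(r_{I^*})) ~\le~ (0.35\,|I|+0.4\,|I^*|)\log m^{1/2}.
\]

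The remaining step is a one-line counting argument. Since $\r_{I^*}$ is supported on $[m^{1/2}]^{I^*}$, at most an $m^{|I^*|/2}\cdot 2^{-K}$ fraction of the mass can sit on outcomes with $p(r_{I^*})<2^{-K}$. Plugging in the required threshold $K=(0.35|I|+0.4|I^*|)\log m^{1/2}$ gives a failure probability of $2^{(0.6|I^*|-0.35|I|)\log m^{1/2}}$. Invoking (Q1) to conclude $|I^*|\le|I|/2$ turns this into $2^{-0.05|I|\log m^{1/2}}$, and $m=n^{256}$ makes it $2^{-6.4|I|\log n}$, comfortably below $\varepsilon/3=2^{-5|I|\log n-1}/3$.

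The whole argument is routine; there is no real ``main obstacle'' here. The only thing to watch is the interplay of constants: the $0.75\,|I|\log m^{1/2}$ min-entropy budget from (Q2) must cover both the $|I'|\log m^{1/2}$ that the claim allows and the $|I^*|\log m^{1/2}$ support-size cost of the conditioning, and (Q1)'s bound $|I^*|\le|I|/2$ is precisely what leaves an $\Omega(|I|\log m^{1/2})$ gap. The polynomial gadget size then converts that gap into the $5|I|\log n$ margin defining $\varepsilon$. Compared to Claim~\ref{clm:step1}, which is the substantive buckets construction, Claim~\ref{clm:step3} is essentially bookkeeping on deficiencies.
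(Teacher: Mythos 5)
Your proof is correct and matches the paper's approach: the paper likewise invokes the min-entropy chain rule (citing Vadhan), uses (Q2) to bound $\Dmin(\r\mid\bell=\ell)$, and uses (Q1) and $m=n^{256}$ to absorb the $\log(3/\varepsilon)$ slack. The only cosmetic difference is that you unpack the chain rule into its underlying counting argument and fix the target deficiency while bounding the failure probability, whereas the paper fixes the failure probability at $\varepsilon/3$ and bounds the resulting deficiency — two dual presentations of the same calculation.
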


We now finish the proof of \autoref{lem:reduce} assuming these three claims. Take $\d\coloneqq\bigl(\c,\bell,\r_{\I^*})$; that is, the data $d\sim\d$ is sampled by first sampling $c\sim\c$, then $\ell\sim\bell$, then $r_{I^*}\sim(\r_{I^*}\mid\bell=\ell)$, where $I^*$ implicitly depends on $c$ and $\ell$. Take $\y^d\coloneqq\y^c$ and $\x^d\coloneqq(\X_I\mid\bell=\ell,\,\r_{I^*}=r_{I^*})$, and note that $\Ex_{d\sim\d}[\x^d\times\y^d]$ indeed forms a decomposition of $\X_I\times\Y_I$. By a union bound, with probability at least $1-\varepsilon$ over $d\sim\d$, the properties of all three claims hold, in which case we just need to check that $\x^d\times\y^d$ is focused.

Since for each $i\in I^*$, $\x^d_i\in T_{\ell_i}$ and $\y^d_{i,T_{\ell_i}}$ are both fixed, we have that $g^{I^*}(\x^d_{I^*},\y^d_{I^*})$ is fixed and hence $g^I(\x^d,\y^d)$ is always consistent with some partial assignment $\sigma$ with $\fix\sigma=I^*$ and $\free\sigma=I'$. We have $|I'|\ge|I|/2$ by \emph{\ref{q1}}. For each $i\in I'$, note that $\x^d_i$ is always in $T_{\ell_i}$ since we conditioned on $\bell=\ell$. Note that (i$^*$) for $\x^d$ holds by \autoref{clm:step3}. To see that (ii$^*$) for $\y^d$ holds, pick any $\ell'$ that agrees with $\ell$ on $I'$ and such that for every $i\in I^*$, the $\ell'_i$ bucket of $\y^d_i$ is not fixed---thus, the bucket union $T_{\ell'}$ contains no fixed buckets of $\y^d$---and note that $\Dmin(\y^d_T)\le\Dmin(\y^d_{T_{\ell'}})\le 1$ by \emph{\ref{p2}}.

\begin{proof}[Proof of \autoref{clm:step1}]
We use a process highly reminiscent of the ``density-restoring partition'' process described in \autoref{sec:density-restoring}. We maintain an event $E$ which is initially all of $(\{0,1\}^m)^I$.\bigskip

\noindent
While $\Pr[\Y_I\in E]>\varepsilon/3$:
\begin{enumerate}[topsep=1mm]
\item Choose a maximal set of pairwise disjoint bucket unions $\calT = \{T_{\ell^1},\ldots,T_{\ell^k}\}$ with the property that $\Dmin(\Y_{\cup\calT}\mid E)>k$ (possibly $\calT=\emptyset$) and let $\beta\in\{0,1\}^{\cup\calT}$ be an outcome witnessing this: $\Pr[\Y_{\cup\calT}=\beta\mid E]>2^{-(k|I|m^{1/2}-k)}$.
\item Output the distribution $(\Y_I\mid\Y_{\cup\calT}=\beta,\,E)$ with associated probability $\Pr[\Y_{\cup\calT}=\beta,\,E]>0$.
\item Update $E\leftarrow\bigl\{y_I\in E\,:\,y_{\cup\calT}\ne\beta\bigr\}$.
\end{enumerate}
Output the distribution $(\Y_I\mid E)$ with associated probability $\Pr[\Y_I\in E]$ if the latter is nonzero.\bigskip

\noindent
The distributions output throughout the process are the $\y^c$'s; note that with the associated probabilities, they indeed form a decomposition of $\Y_I$. Each time \emph{(1)} is executed, we have \[k~<~\Dmin(\Y_{\cup\calT}\mid E)~\le~\Dmin(\Y_I)+\log(1/\Pr[\Y_I\in E])~\le~n^3+1+\log(3/\varepsilon)~\le~2n^3.\] Also, any $\y^c=(\Y_I\mid\Y_{\cup\calT}=\beta,\,E)$ output in \emph{(2)} has the property that for any bucket union $T_\ell$ not containing fixed buckets, $\Dmin(\y^c_{T_\ell})\le 1$. To see this, first note that $T_\ell$ is disjoint from $\cup\calT$ since the latter buckets are fixed to $\beta$. If $\Dmin(\y^c_{T_\ell})>1$ were witnessed by some $\gamma\in\{0,1\}^{T_\ell}$, then
\begin{align*}
\Pr[\Y_{(\cup\calT)\cup T_\ell}=\beta\gamma\mid E]~&=~\Pr[\Y_{\cup\calT}=\beta\mid E]\cdot\Pr[\Y_{T_\ell}=\gamma\mid\Y_{\cup\calT}=\beta,\,E]\\
&>~2^{-(k|I|m^{1/2}-k)}\cdot 2^{-(|I|m^{1/2}-1)}~=~2^{-((k+1)|I|m^{1/2}-(k+1))}
\end{align*}
and so $\Dmin(\Y_{(\cup\calT)\cup T_\ell}\mid E)>k+1$, which would contradict the maximality of $k$ since $\{T_{\ell^1},\ldots,T_{\ell^k},T_\ell\}$ is a set of pairwise disjoint bucket unions.
\end{proof}

\begin{proof}[Proof of \autoref{clm:step2}]
Assume that for each coordinate $i\in I$, $\y^c_i$ has at most $2n^3$ fixed buckets. Since $\X_I$ is $0.9$-essentially-dense, $\bell$ is $0.8$-essentially-dense (for each nonempty $H\subseteq I$, we have $\Dmin(\bell_H)\le\Dmin(\X_H)\le 0.1|H|\log m+1=0.2|H|\log m^{1/2}+1$). Thus, the probability that $T_{\bell}$ hits fixed buckets in all coordinates in some set $H\subseteq I$ is at most the number of ways of choosing a fixed bucket from each of those coordinates ($\le (2n^3)^{|H|}$) times the maximum probability that $T_{\bell}$ hits all the chosen buckets ($\le 2^{-(0.8|H|\log m^{1/2}-1)}$ since $\bell$ is $0.8$-essentially-dense). We can now calculate
\begin{align*}
\Pr[T_{\bell}\text{ hits $\ge|I|/2$ fixed buckets}]~
&\le~\textstyle\sum_{H\subseteq I,|H|=|I|/2}\Pr[T_{\bell}\text{ hits fixed buckets in coordinates $H$}]\\
&\le~\textstyle\binom{|I|}{|I|/2}\cdot (2n^3)^{|I|/2}\cdot 2^{-(0.8(|I|/2)\log m^{1/2}-1)}\\
&\le~2^{|I|}\cdot 2^{1.5|I|\log n+1} \cdot 2^{-(51.2|I|\log n-1)}
\tag{using $m= n^{256}$}\\
&\le~2^{|I|-49.7|I|\log n+2}\\
&\le~\varepsilon/6
\end{align*}
For convenience, we assumed above that $|I|$ is even; if $|I|$ is odd (including the case $|I|=1$), the same calculation works with $\lceil|I|/2\rceil$ instead of $|I|/2$.

\emph{\ref{q2}} follows by a direct application of the chain rule for min-entropy \cite[Lemma 6.30]{vadhan12pseudorandomness}: with probability at least $1-\varepsilon/6$ over $\ell\sim\bell$, we have \[\Dmin(\r\mid\bell=\ell)~\le~\Dmin(\X_I)+\log(6/\varepsilon)~\le~\bigl(0.1|I|\log m+1\bigr)+\bigl(5|I|\log n+4\bigr)~\le~0.25|I|\log m^{1/2}.\] By a union bound, with probability at least $1-\varepsilon/3$ over $\bell$, \emph{\ref{q1}} and \emph{\ref{q2}} hold simultaneously.
\end{proof}

\begin{proof}[Proof of \autoref{clm:step3}]
This is again a direct application of the chain rule for min-entropy: with probability at least $1-\varepsilon/3$ over $r_{I^*}\sim(\r_{I^*}\mid\bell=\ell)$, we have
\begin{align*}
\Dmin(\r_{I'}\mid\bell=\ell,\,\r_{I^*}=r_{I^*})~&\le~\Dmin(\r\mid\bell=\ell)+\log(3/\varepsilon)\\
&\le~\bigl(0.25|I|\log m^{1/2}\bigr)+\bigl(5|I|\log n+3\bigr)~\le~0.6|I'|\log m^{1/2}
\end{align*}
where the middle inequality uses \emph{\ref{q2}}, and the last inequality uses \emph{\ref{q1}} ($|I'|\ge|I|/2$) and $m=n^{256}$.
\end{proof}

\subsection{Pointwise uniformity from parities} \label{sec:fourier-proof}

\fourier*

\begin{proof}[Proof (from {\upshape\cite[\S2.2]{goos16rectangles}})]
We let $\varepsilon\coloneqq 1/n^3$ and write $\z_J$ as $\z$ throughout the proof. We think of the distribution of $\z$ as a function $\calD\colon\{0,1\}^J\to[0,1]$ and write it in the Fourier basis as \[\textstyle\calD(z)~=~\sum_{I\subseteq J}\widehat{\calD}(I)\chi_I(z)\] where $\chi_I(z)\coloneqq(-1)^{\oplus(z_I)}$ and $\widehat{\calD}(I)\coloneqq 2^{-|J|}\sum_{z}\calD(z)\chi_I(z)=2^{-|J|}\cdot\Ex[\chi_I(\z)]$. Note that $\widehat{\calD}(\emptyset)=2^{-|J|}$ because $\calD$ is a distribution. Our assumption says that for all nonempty $I\subseteq J$, $2^{|J|}\cdot|\widehat{\calD}(I)|\le 2^{-5|I|\log n}$, which is at most $\varepsilon 2^{-2|I|\log|J|}$. Hence, \[\textstyle 2^{|J|}\sum_{I\neq\emptyset}|\widehat{\calD}(I)|~\le~\varepsilon\sum_{I\neq\emptyset}2^{-2|I|\log|J|}~=~\varepsilon\sum_{k=1}^{|J|}\binom{|J|}{k}2^{-2k\log|J|}~\le~\varepsilon\sum_{k=1}^{|J|}2^{-k\log|J|}~\le~\varepsilon.\] We use this to show that $\bigl|\calD(z)-2^{-|J|}\bigr|\le\varepsilon 2^{-|J|}$ for all $z\in\{0,1\}^J$, which proves the lemma. To this end, let $\calU$ denote the uniform distribution (note that $\widehat{\calU}(I)=0$ for all nonempty $I\subseteq J$) and let $\mathds{1}_z$ denote the indicator for $z$ defined by $\mathds{1}_z(z)=1$ and $\mathds{1}_z(z')=0$ for $z'\ne z$ (note that $|\widehat{\mathds{1}}_z(I)|=2^{-|J|}$ for all $I$). We can now calculate
\begin{align*}
\bigl|\calD(z)-2^{-|J|}\bigr|~&=~\bigl|\langle\mathds{1}_z,\calD\rangle-\langle\mathds{1}_z,\calU\rangle\bigr|~=~|\langle\mathds{1}_z,\calD-\calU\rangle|~=~2^{|J|}\cdot|\langle\widehat{\mathds{1}}_z,\widehat{\calD}-\widehat{\calU}\rangle|\\
&\le~2^{|J|}\cdot{\textstyle\sum_{I\neq\emptyset}}|\widehat{\mathds{1}}_z(I)|\cdot|\widehat{\calD}(I)|~=~{\textstyle\sum_{I\neq\emptyset}}|\widehat{\calD}(I)|~\le~\varepsilon 2^{-|J|}.\qedhere
\end{align*}
\end{proof}

\section{Applications} \label{sec:applications}

In this section, we collect some recent results in communication complexity, which we can derive (often with simplifications) from our lifting theorem.

\paragraph{Classical vs.\ quantum.}
Anshu et al.~\cite{anshu16separations} gave a nearly 2.5-th power total function separation between quantum and classical randomized protocols. Our lifting theorem can reproduce this separation by lifting an analogous separation in query complexity due to Aaronson, Ben-David, and Kothari~\cite{aaronson16separations}. Let us also mention that Aaronson and Ambainis~\cite{aaronson15forrelation} have conjectured that a slight generalization of $\Forr$ witnesses an $O(\log n)$-vs-$\tOmega(n)$ quantum/classical query separation. If true, our lifting theorem implies that~``2.5'' can be improved to ``$3$'' above; see~\cite{aaronson16separations} for a discussion. (Such an improvement is not black-box implied by the techniques of Anshu et al.~\cite{anshu16separations}.)

Raz~\cite{raz99exponential} gave an exponential partial function separation between quantum and classical randomized protocols. Our lifting theorem can reproduce this separation by lifting, say, the $\Forr$ partial function~\cite{aaronson15forrelation}, which witnesses a $1$-vs-$\tOmega(\sqrt{n})$ separation for quantum/classical query complexity. However, qualitatively stronger separations are known \cite{klartag11quantum,gavinsky16entangled} where the quantum protocol can be taken to be \emph{one-way} or even \emph{simultaneous}.

\paragraph{Partition numbers.}
Anshu et al.~\cite{anshu16separations} gave a nearly quadratic separation between (the log of) the \emph{two-sided partition number} (number of monochromatic rectangles needed to partition the domain of $F$) and randomized communication complexity. This result now follows by lifting an analogous separation in query complexity due to Ambainis, Kokainis, and Kothari~\cite{ambainis16nearly}.

In \cite{goos15randomized}, a nearly quadratic separation was shown between (the log of) the \emph{one-sided partition number} (number of rectangles needed to partition $F^{-1}(1)$) and randomized communication complexity. This separation question can be equivalently phrased as proving randomized lower bounds for the \emph{Clique vs.~Independent Set} game~\cite{yannakakis91expressing}. This result now follows by lifting an analogous separation in query complexity, obtained in several papers~\cite{goos15randomized,ambainis16separations,aaronson16separations}; it was previously shown using the lifting theorem of \cite{goos16rectangles}, which requires a query lower bound in a model stronger than $\BPPdt$.

\paragraph{Approximate Nash equilibria.}
Babichenko and Rubinstein~\cite{babichenko17communication} showed a randomized communication lower bound for finding an approximate Nash equilibrium in a two-player game. Their approach was to show a lower bound for a certain query version of the $\PPAD$-complete $\Eol$ problem, and then lift this lower bound into communication complexity using~\cite{goos16rectangles}. However, as in the above Clique vs.~Independent Set result, the application of~\cite{goos16rectangles} here requires that the query lower bound is established for a model stronger than $\BPPdt$, which required some additional busywork. Our lifting theorem can be used to streamline their proof.

\bigskip
\subsection*{Acknowledgements}
Thanks to Shalev Ben-David and Robin Kothari for quantum references. Thanks to Anurag Anshu, Rahul Jain, Raghu Meka, Aviad Rubinstein, and Henry Yuen for discussions.

\bigskip

\DeclareUrlCommand{\Doi}{\urlstyle{sf}}
\renewcommand{\path}[1]{\small\Doi{#1}}
\renewcommand{\url}[1]{\href{#1}{\small\Doi{#1}}}
\newcommand{\etalchar}[1]{$^{#1}$}


\begin{thebibliography}{ABB{\etalchar{+}}16b}

\bibitem[AA15]{aaronson15forrelation}
Scott Aaronson and Andris Ambainis.
\newblock Forrelation: A problem that optimally separates quantum from
  classical computing.
\newblock In {\em Proceedings of the 47th Symposium on Theory of Computing
  (STOC)}, pages 307--316. ACM, 2015.
\newblock \href {http://dx.doi.org/10.1145/2746539.2746547}
  {\path{doi:10.1145/2746539.2746547}}.

\bibitem[ABB{\etalchar{+}}16a]{ambainis16separations}
Andris Ambainis, Kaspars Balodis, Aleksandrs Belovs, Troy Lee, Miklos Santha,
  and Juris Smotrovs.
\newblock Separations in query complexity based on pointer functions.
\newblock In {\em Proceedings of the 48th Symposium on Theory of Computing
  (STOC)}, pages 800--813. ACM, 2016.
\newblock \href {http://dx.doi.org/10.1145/2897518.2897524}
  {\path{doi:10.1145/2897518.2897524}}.

\bibitem[ABB{\etalchar{+}}16b]{anshu16separations}
Anurag Anshu, Aleksandrs Belovs, Shalev {Ben-David}, Mika G{\"o}{\"o}s, Rahul
  Jain, Robin Kothari, Troy Lee, and Miklos Santha.
\newblock Separations in communication complexity using cheat sheets and
  information complexity.
\newblock In {\em Proceedings of the 57th Symposium on Foundations of Computer
  Science (FOCS)}, pages 555--564. IEEE, 2016.
\newblock \href {http://dx.doi.org/10.1109/FOCS.2016.66}
  {\path{doi:10.1109/FOCS.2016.66}}.

\bibitem[ABK16]{aaronson16separations}
Scott Aaronson, Shalev {Ben-David}, and Robin Kothari.
\newblock Separations in query complexity using cheat sheets.
\newblock In {\em Proceedings of the 48th Symposium on Theory of Computing
  (STOC)}, pages 863--876. ACM, 2016.
\newblock \href {http://dx.doi.org/10.1145/2897518.2897644}
  {\path{doi:10.1145/2897518.2897644}}.

\bibitem[AKK16]{ambainis16nearly}
Andris Ambainis, Martins Kokainis, and Robin Kothari.
\newblock Nearly optimal separations between communication (or query)
  complexity and partitions.
\newblock In {\em Proceedings of the 31st Computational Complexity Conference
  (CCC)}, pages 4:1--4:14. Schloss Dagstuhl, 2016.
\newblock \href {http://dx.doi.org/10.4230/LIPIcs.CCC.2016.4}
  {\path{doi:10.4230/LIPIcs.CCC.2016.4}}.

\bibitem[BdW02]{buhrman02complexity}
Harry Buhrman and Ronald de~Wolf.
\newblock Complexity measures and decision tree complexity: A survey.
\newblock {\em Theoretical Computer Science}, 288(1):21--43, 2002.
\newblock \href {http://dx.doi.org/10.1016/S0304-3975(01)00144-X}
  {\path{doi:10.1016/S0304-3975(01)00144-X}}.

\bibitem[BJKS04]{bar-yossef04information}
Ziv {Bar-Yossef}, T.S. Jayram, Ravi Kumar, and D.~Sivakumar.
\newblock An information statistics approach to data stream and communication
  complexity.
\newblock {\em Journal of Computer and System Sciences}, 68(4):702--732, 2004.
\newblock \href {http://dx.doi.org/10.1016/j.jcss.2003.11.006}
  {\path{doi:10.1016/j.jcss.2003.11.006}}.

\bibitem[BK16]{bendavid16randomized}
Shalev Ben{-}David and Robin Kothari.
\newblock Randomized query complexity of sabotaged and composed functions.
\newblock In {\em Proceedings of the 43rd International Colloquium on Automata,
  Languages, and Programming (ICALP)}, pages 60:1--60:14. Schloss Dagstuhl,
  2016.
\newblock \href {http://dx.doi.org/10.4230/LIPIcs.ICALP.2016.60}
  {\path{doi:10.4230/LIPIcs.ICALP.2016.60}}.

\bibitem[BR17]{babichenko17communication}
Yakov Babichenko and Aviad Rubinstein.
\newblock Communication complexity of approximate {N}ash equilibria.
\newblock In {\em Proceedings of the 49th Symposium on Theory of Computing
  (STOC)}. ACM, 2017.
\newblock To appear.
\newblock \href {http://arxiv.org/abs/1608.06580} {\path{arXiv:1608.06580}}.

\bibitem[CKLM17]{chattopadhyay17composition}
Arkadev Chattopadhyay, Michal Kouck{\'{y}}, Bruno Loff, and Sagnik
  Mukhopadhyay.
\newblock Composition and simulation theorems via pseudo-random properties.
\newblock Technical Report TR17-014, Electronic Colloquium on Computational
  Complexity (ECCC), 2017.
\newblock URL: \url{https://eccc.weizmann.ac.il/report/2017/014/}.

\bibitem[CLRS16]{chan16approximate}
Siu~On Chan, James Lee, Prasad Raghavendra, and David Steurer.
\newblock Approximate constraint satisfaction requires large {LP} relaxations.
\newblock {\em Journal of the ACM}, 63(4):34:1--34:22, 2016.
\newblock \href {http://dx.doi.org/10.1145/2811255}
  {\path{doi:10.1145/2811255}}.

\bibitem[CR12]{chakrabarti12optimal}
Amit Chakrabarti and Oded Regev.
\newblock An optimal lower bound on the communication complexity of
  gap-hamming-distance.
\newblock {\em SIAM Journal on Computing}, 41(5):1299--1317, 2012.
\newblock \href {http://dx.doi.org/10.1137/120861072}
  {\path{doi:10.1137/120861072}}.

\bibitem[dRNV16]{rezende16limited}
Susanna de~Rezende, Jakob Nordstr{\"o}m, and Marc Vinyals.
\newblock How limited interaction hinders real communication (and what it means
  for proof and circuit complexity).
\newblock In {\em Proceedings of the 57th Symposium on Foundations of Computer
  Science (FOCS)}, pages 295--304. IEEE, 2016.
\newblock \href {http://dx.doi.org/10.1109/FOCS.2016.40}
  {\path{doi:10.1109/FOCS.2016.40}}.

\bibitem[Gav16]{gavinsky16entangled}
Dmitry Gavinsky.
\newblock Entangled simultaneity versus classical interactivity in
  communication complexity.
\newblock In {\em Proceedings of the 48th Symposium on Theory of Computing
  (STOC)}, pages 877--884. ACM, 2016.
\newblock \href {http://dx.doi.org/10.1145/2897518.2897545}
  {\path{doi:10.1145/2897518.2897545}}.

\bibitem[GJPW15]{goos15randomized}
Mika G{\"o}{\"o}s, T.S. Jayram, Toniann Pitassi, and Thomas Watson.
\newblock Randomized communication vs. partition number.
\newblock Technical Report TR15-169, Electronic Colloquium on Computational
  Complexity (ECCC), 2015.
\newblock URL: \url{http://eccc.hpi-web.de/report/2015/169/}.

\bibitem[GKPW17]{goos17query}
Mika G{\"o}{\"o}s, Pritish Kamath, Toniann Pitassi, and Thomas Watson.
\newblock Query-to-communication lifting for {$\text{P}^{\text{NP}}$}.
\newblock Technical Report TR17-024, Electronic Colloquium on Computational
  Complexity (ECCC), 2017.
\newblock URL: \url{https://eccc.weizmann.ac.il/report/2017/024/}.

\bibitem[GLM{\etalchar{+}}16]{goos16rectangles}
Mika G{\"o}{\"o}s, Shachar Lovett, Raghu Meka, Thomas Watson, and David
  Zuckerman.
\newblock Rectangles are nonnegative juntas.
\newblock {\em SIAM Journal on Computing}, 45(5):1835--1869, 2016.
\newblock \href {http://dx.doi.org/10.1137/15M103145X}
  {\path{doi:10.1137/15M103145X}}.

\bibitem[G{\"o}{\"o}15]{goos15lower}
Mika G{\"o}{\"o}s.
\newblock Lower bounds for clique vs. independent set.
\newblock In {\em Proceedings of the 56th Symposium on Foundations of Computer
  Science (FOCS)}, pages 1066--1076. IEEE, 2015.
\newblock \href {http://dx.doi.org/10.1109/FOCS.2015.69}
  {\path{doi:10.1109/FOCS.2015.69}}.

\bibitem[GPW15]{goos15deterministic}
Mika G{\"o}{\"o}s, Toniann Pitassi, and Thomas Watson.
\newblock Deterministic communication vs. partition number.
\newblock In {\em Proceedings of the 56th Symposium on Foundations of Computer
  Science (FOCS)}, pages 1077--1088. IEEE, 2015.
\newblock \href {http://dx.doi.org/10.1109/FOCS.2015.70}
  {\path{doi:10.1109/FOCS.2015.70}}.

\bibitem[HHL16]{hatami16structure}
Hamed Hatami, Kaave Hosseini, and Shachar Lovett.
\newblock Structure of protocols for {XOR} functions.
\newblock In {\em Proceedings of the 57th Symposium on Foundations of Computer
  Science (FOCS)}, pages 282--288. IEEE, 2016.
\newblock \href {http://dx.doi.org/10.1109/FOCS.2016.38}
  {\path{doi:10.1109/FOCS.2016.38}}.

\bibitem[Juk12]{jukna12boolean}
Stasys Jukna.
\newblock {\em Boolean Function Complexity: Advances and Frontiers}, volume~27
  of {\em Algorithms and Combinatorics}.
\newblock Springer, 2012.

\bibitem[KMR17]{kothari17approximating}
Pravesh Kothari, Raghu Meka, and Prasad Raghavendra.
\newblock Approximating rectangles by juntas and weakly-exponential lower
  bounds for {LP} relaxations of {CSP}s.
\newblock In {\em Proceedings of the 49th Symposium on Theory of Computing
  (STOC)}. ACM, 2017.
\newblock To appear.
\newblock \href {http://arxiv.org/abs/1610.02704} {\path{arXiv:1610.02704}}.

\bibitem[KN97]{kushilevitz97communication}
Eyal Kushilevitz and Noam Nisan.
\newblock {\em Communication Complexity}.
\newblock Cambridge University Press, 1997.

\bibitem[KR11]{klartag11quantum}
Bo'az Klartag and Oded Regev.
\newblock Quantum one-way communication can be exponentially stronger than
  classical communication.
\newblock In {\em Proceedings of the 43rd Symposium on Theory of Computing
  (STOC)}, pages 31--40. ACM, 2011.
\newblock \href {http://dx.doi.org/10.1145/1993636.1993642}
  {\path{doi:10.1145/1993636.1993642}}.

\bibitem[KS92]{kalyanasundaram92probabilistic}
Bala Kalyanasundaram and Georg Schnitger.
\newblock The probabilistic communication complexity of set intersection.
\newblock {\em SIAM Journal on Discrete Mathematics}, 5(4):545--557, 1992.
\newblock \href {http://dx.doi.org/10.1137/0405044}
  {\path{doi:10.1137/0405044}}.

\bibitem[LRS15]{lee15lower}
James Lee, Prasad Raghavendra, and David Steurer.
\newblock Lower bounds on the size of semidefinite programming relaxations.
\newblock In {\em Proceedings of the 47th Symposium on Theory of Computing
  (STOC)}, pages 567--576. ACM, 2015.
\newblock \href {http://dx.doi.org/10.1145/2746539.2746599}
  {\path{doi:10.1145/2746539.2746599}}.

\bibitem[Raz92]{razborov92distributional}
Alexander Razborov.
\newblock On the distributional complexity of disjointness.
\newblock {\em Theoretical Computer Science}, 106(2):385--390, 1992.
\newblock \href {http://dx.doi.org/10.1016/0304-3975(92)90260-M}
  {\path{doi:10.1016/0304-3975(92)90260-M}}.

\bibitem[Raz99]{raz99exponential}
Ran Raz.
\newblock Exponential separation of quantum and classical communication
  complexity.
\newblock In {\em Proceedings of the 31st Symposium on Theory of Computing
  (STOC)}, pages 358--367. ACM, 1999.
\newblock \href {http://dx.doi.org/10.1145/301250.301343}
  {\path{doi:10.1145/301250.301343}}.

\bibitem[RM99]{raz99separation}
Ran Raz and Pierre McKenzie.
\newblock Separation of the monotone {NC} hierarchy.
\newblock {\em Combinatorica}, 19(3):403--435, 1999.
\newblock \href {http://dx.doi.org/10.1007/s004930050062}
  {\path{doi:10.1007/s004930050062}}.

\bibitem[RPRC16]{robere16exponential}
Robert Robere, Toniann Pitassi, Benjamin Rossman, and Stephen Cook.
\newblock Exponential lower bounds for monotone span programs.
\newblock In {\em Proceedings of the 57th Symposium on Foundations of Computer
  Science (FOCS)}, pages 406--415. IEEE, 2016.
\newblock \href {http://dx.doi.org/10.1109/FOCS.2016.51}
  {\path{doi:10.1109/FOCS.2016.51}}.

\bibitem[RS10]{razborov10sign}
Alexander Razborov and Alexander Sherstov.
\newblock The sign-rank of {AC}$^0$.
\newblock {\em SIAM Journal on Computing}, 39(5):1833--1855, 2010.
\newblock \href {http://dx.doi.org/10.1137/080744037}
  {\path{doi:10.1137/080744037}}.

\bibitem[RY17]{rao17communication}
Anup Rao and Amir Yehudayoff.
\newblock {\em Communication Complexity}.
\newblock In preparation, 2017.

\bibitem[Sha03]{shaltiel03towards}
Ronen Shaltiel.
\newblock Towards proving strong direct product theorems.
\newblock {\em Computational Complexity}, 12(12):1--22, 2003.
\newblock \href {http://dx.doi.org/10.1007/s00037-003-0175-x}
  {\path{doi:10.1007/s00037-003-0175-x}}.

\bibitem[She11]{sherstov11pattern}
Alexander Sherstov.
\newblock The pattern matrix method.
\newblock {\em SIAM Journal on Computing}, 40(6):1969--2000, 2011.
\newblock \href {http://dx.doi.org/10.1137/080733644}
  {\path{doi:10.1137/080733644}}.

\bibitem[She12]{sherstov12communication}
Alexander Sherstov.
\newblock The communication complexity of gap hamming distance.
\newblock {\em Theory of Computing}, 8(1):197--208, 2012.
\newblock \href {http://dx.doi.org/10.4086/toc.2012.v008a008}
  {\path{doi:10.4086/toc.2012.v008a008}}.

\bibitem[SZ09]{shi09quantum}
Yaoyun Shi and Yufan Zhu.
\newblock Quantum communication complexity of block-composed functions.
\newblock {\em Quantum Information and Computation}, 9(5--6):444--460, 2009.

\bibitem[Vad12]{vadhan12pseudorandomness}
Salil Vadhan.
\newblock Pseudorandomness.
\newblock {\em Foundations and Trends in Theoretical Computer Science},
  7(1--3):1--336, 2012.
\newblock \href {http://dx.doi.org/10.1561/0400000010}
  {\path{doi:10.1561/0400000010}}.

\bibitem[Ver99]{vereshchagin99relativizability}
Nikolai Vereshchagin.
\newblock Relativizability in complexity theory.
\newblock In {\em Provability, Complexity, Grammars}, volume 192 of {\em AMS
  Translations, Series 2}, pages 87--172. American Mathematical Society, 1999.

\bibitem[Vid13]{vidick13concentration}
Thomas Vidick.
\newblock A concentration inequality for the overlap of a vector on a large
  set, with application to the communication complexity of the
  gap-hamming-distance problem.
\newblock {\em Chicago Journal of Theoretical Computer Science}, 2013(1):1--12,
  2013.
\newblock \href {http://dx.doi.org/10.4086/cjtcs.2012.001}
  {\path{doi:10.4086/cjtcs.2012.001}}.

\bibitem[WYY17]{wu17raz}
Xiaodi Wu, Penghui Yao, and Henry Yuen.
\newblock {R}az--{M}c{K}enzie simulation with the inner product gadget.
\newblock Technical Report TR17-010, Electronic Colloquium on Computational
  Complexity (ECCC), 2017.
\newblock URL: \url{https://eccc.weizmann.ac.il/report/2017/010/}.

\bibitem[Yan91]{yannakakis91expressing}
Mihalis Yannakakis.
\newblock Expressing combinatorial optimization problems by linear programs.
\newblock {\em Journal of Computer and System Sciences}, 43(3):441--466, 1991.
\newblock \href {http://dx.doi.org/10.1016/0022-0000(91)90024-Y}
  {\path{doi:10.1016/0022-0000(91)90024-Y}}.

\end{thebibliography}
\end{document}